\newcommand{\kinter}{{$k$-intersection}}
\theoremstyle{plain}
\newtheorem{thm}{Theorem}[section]
\newtheorem{cor}[thm]{Corollary}
\newtheorem{claim}[thm]{Claim}
\theoremstyle{definition}
\newtheorem{dfn}[thm]{Definition}
\theoremstyle{remark}
\newtheorem{rem}[thm]{Remark}
\newtheorem{que}[thm]{Question}
\newcommand{\dft}[1]{\textbf{\textit{#1}}}
\newcommand{\calE}{\mathcal{E}}
\newcommand{\calG}{\mathcal{G}}
\newcommand{\wtM}{\widetilde{M}}
\newcommand{\mG}{\mathcal{G}}
\DeclareMathOperator*{\E}{\mathbf{E}}
\newcommand{\R}{\mathbf{R}}
\newcommand{\N}{\mathbf{N}}
\DeclareMathOperator{\cost}{cost}
\DeclareMathOperator{\disj}{disj}
\DeclareMathOperator{\inter}{int}
\newcommand{\gmom}{g_M}%{g_{\text{mom}, \wtM_s}}
\newcommand{\e}{\varepsilon}
\newcommand{\abs}[1]{\left|#1\right|}
\newcommand{\paren}[1]{\left(#1\right)}
\newcommand{\set}[1]{\left\{#1\right\}}
\newcommand{\sucht}{\,\middle|\,}
\renewcommand{\th}{^{\textrm{th}}}
\DeclareMathOperator{\tvd}{dist_{TV}}
\DeclareMathOperator{\neighbor}{nbr}
\DeclareMathOperator{\nbr}{nbr}
\DeclareMathOperator{\degree}{d}
\DeclareMathOperator{\pair}{pair}
\newcommand{\mA}{\mathcal{A}}
\newcommand{\mhat}{\widehat{m}}
\def\withcolors{1}
\newcommand{\tnote}[1]{{\color{red}{#1}}}
\newcommand{\strike}[1]{{\sout{#1}}}
\newcommand{\tnote}[1]{{{#1}}}
\newcommand{\strike}[1]{{{}}}
\title{Lower Bounds for Approximating Graph Parameters via Communication Complexity}
\date{}
\author{%
  \begin{tabular}{c@{\extracolsep{16pt}}c}
    Talya Eden\thanks{\texttt{talyaa01@gmail.com}} & Will Rosenbaum\thanks{
      \texttt{will.rosenbaum@gmail.com}
    }
    \\
    \multicolumn{2}{c}{School of Electrical Engineering}\\
    \multicolumn{2}{c}{Tel Aviv University}\\ 
    \multicolumn{2}{c}{Tel Aviv 6997801}\\ 
    \multicolumn{2}{c}{Israel}
  \end{tabular}
  %% Talya Eden \\ 
  %% \and
  %% Will Rosenbaum\\
}
\begin{document}
\maketitle

%%%%%%%%%%%%%%%%%%%%%%%%%%%%%%%%%%%%%%%%%%%%%%%%%%%%%%%%%%%%%%%%%%%%%%%%%%%%%%%%
%% Abstract
%%%%%%%%%%%%%%%%%%%%%%%%%%%%%%%%%%%%%%%%%%%%%%%%%%%%%%%%%%%%%%%%%%%%%%%%%%%%%%%%
\begin{abstract}
  In a celebrated work, Blais, Brody, and Matulef~\cite{Blais2012} developed a technique for proving property testing lower bounds via reductions from communication complexity. Their work focused on testing properties of functions, and yielded new lower bounds as well as simplified analyses of known lower bounds. Here, we take a further step in generalizing the methodology of~\cite{Blais2012} to analyze the query complexity of graph parameter estimation problems. In particular, our technique decouples the lower bound arguments from the representation of the graph, allowing it to work with any query type.

  We illustrate our technique by providing new simpler proofs of previously known tight lower bounds for the query complexity of several graph problems: estimating the number of edges in a graph, sampling edges from an almost-uniform distribution, estimating the number of triangles (and more generally, $r$-cliques) in a graph, and estimating the moments of the degree distribution of a graph. We also prove new lower bounds for estimating the edge connectivity of a graph and estimating the number of instances of any fixed subgraph in a graph. We show that the lower bounds for estimating the number of triangles and edge connectivity also hold in a strictly stronger computational model that allows access to uniformly random edge samples.
\end{abstract}

%%%%%%%%%%%%%%%%%%%%%%%%%%%%%%%%%%%%%%%%%%%%%%%%%%%%%%%%%%%%%%%%%%%%%%%%%%%%%%%%
%% Introduction
%%%%%%%%%%%%%%%%%%%%%%%%%%%%%%%%%%%%%%%%%%%%%%%%%%%%%%%%%%%%%%%%%%%%%%%%%%%%%%%%
\section{Introduction}

Since the seminal work of Yao~\cite{Yao1979}, (two party) communication complexity has become a central topic in computational complexity. While communication complexity is well-studied in its own right, its importance in complexity theory is highlighted by its numerous applications, for example, in proving lower bounds for Turing machines~\cite{Paturi1986, Kalyanasundaram1992-1}, streaming algorithms~\cite{Alon1996}, circuit complexity~\cite{Karchmer1990}, distributed algorithms~\cite{Hromkovic2013, DasSarma2012}, and algorithmic game theory~\cite{Gonczarowski2015}. Lower bounds via reductions from communication complexity also tend to be conceptually simple: delicate analysis is delegated to a small handful of fundamental results in communication complexity.

In typical applications of communication complexity, instances of the problem considered can be readily partitioned into two or more pieces. This not the case in the context of property testing~\cite{Goldreich1998}, where the goal is to distinguish instances of a problem that satisfy some property from those that are ``far'' from satisfying the property. Thus, the work of Blais, Brody, and Matulef~\cite{Blais2012} was surprising, as it drew a close connection between communication complexity and property testing. Specifically, in~\cite{Blais2012}, the authors develop a framework for applying communication complexity to obtain lower bounds in the property testing model. Their methodology yielded new results in property testing, as well as simpler proofs of known results.

The framework of~\cite{Blais2012} (which was subsequently generalized by Goldreich in~\cite{Gol13_CC}) applied to the property testing of functions. In this context, an algorithm is given query access to a function, where each query consists of evaluating the function a single input specified by the algorithm. The goal is to design algorithms that distinguish functions having some property---for example being monotonic, or $k$-linear---from those that are far from having the property\tnote{,} in the sense that a constant fraction of the function's outputs must be changed in order for it to satisfy the property. The basic methodology of~\cite{Blais2012} and~\cite{Gol13_CC} is as follows. Let $P$ be a property, $\Pi$ a two-party communication problem, and $x$ and $y$ the private inputs of the two parties. The idea is then to construct a function $f_{x, y}$ with the following properties:
\begin{enumerate}
\item if $\Pi(x, y) = 1$, then $f_{x, y}$ satisfies $P$;
\item if $\Pi(x, y) = 0$, then $f_{x, y}$ is far from satisfying $P$;
\item for each $z$ in the domain of $f_{x, y}$, $f_{x, y}(z)$ can be computed from $x$ and $y$ using at most $B$ bits of communication.
\end{enumerate}
The main result of~\cite{Blais2012} (cf.~\cite{Gol13_CC}) states that under the three conditions above, any algorithm that tests $P$ requires at least $R(\Pi) / B$ queries, where $R(\Pi)$ is the randomized communication complexity of $\Pi$.

In this work, our goal is to prove lower bounds on the number of queries necessary to (approximately) solve various graph problems. In particular, we adapt the framework described above to the context of graph parameter estimation problems. Any graph $G = (V, E)$ on $n$ nodes can be viewed as a Boolean function whose values are are the entries of the adjacency matrix of $G$. Thus we can directly apply the methodology of~\cite{Blais2012}. This view of graph property testing corresponds to the ``dense graph'' model introduced in~\cite{Goldreich1998}: the graph is accessed only through ``pair queries'' (i.e., asking if two nodes share an edge), and two graphs are far apart only if they differ on $\e n^2$ edges. However, this model is not suitable for analyzing graphs with $o(n^2)$ edges. In order to deal with property testing and parameter estimation in non-dense families of graphs, more refined graph models were introduced. These models allow additional types of queries---degree queries and neighbor queries---that cannot be handled in any obvious way using the models of~\cite{Blais2012} and~\cite{Gol13_CC}.

\subsection{Our Results}

Our main result (Theorem~\ref{thm:query-lb}) gives a general reduction from communication complexity problems to graph problems. The theorem is closely related to Theorem~3.1 in~\cite{Gol13_CC}, but it makes a further step at generalizing the results of~\cite{Blais2012}. In particular, our result makes no assumptions about the representation of the objects in question or the types of queries allowed (although nontrivial lower bounds are only obtained when the allowable queries can be efficiently simulated with a 2-party communication protocol). Thus, we believe our main result offers several advantages. Since our approach decouples the graph queries from the representation of the graph (e.g., by its adjacency lists or adjacency matrix), it can handle many different query models (possibly simultaneously). Further, our framework may be useful in distinguishing the relative power of different query access models. Finally, we believe our results unify and simplify previous lower bound arguments for graph parameter estimation (which typically relied upon careful analysis of statistical distances between families of graphs).

We apply our lower bound framework to the following graph problems that have been addressed previously:
\begin{enumerate}
\item estimating the number of edges~\cite{feige06, Goldreich2008} (Section~\ref{sec:subgraph-counting}, Appendix~\ref{sec:edge-counting}),
\item sampling edges from an almost-uniform distribution~\cite{Eden2017} (Section~\ref{sec:edge-sampling}),
\item estimating the number of $r$-cliques~\cite{CLIQUES} (in particular, triangles~\cite{Eden2015}; Appendix~\ref{sec:cliques} and Section~\ref{sec:triangles}, respectively)
\item estimating the moments of the degree distribution~\cite{DegMoments} (Appendix~\ref{sec:moments}).
\end{enumerate}
Our technique yields tight lower bounds for each of the problem listed above. Further, we prove new lower bounds for:
\begin{enumerate}
\item[5.] estimating the number of instances of any fixed subgraph $H$ (Section~\ref{sec:subgraph-counting}),
\item[6.] estimating the edge-connectivity of a graph (Section~\ref{sec:connectivity}).
\end{enumerate}
In results (3) and (6), our lower bounds hold in a strictly stronger graph access model that additionally allows uniformly random edge samples. Interestingly, all of the lower bounds we prove are polynomial in the size of the instance, whereas the lower bounds presented in~\cite{Blais2012} and~\cite{Gol13_CC} are typically logarithmic in the instance size.

%% Informally, each of these graph parameters is hard (i.e., requires many queries) because it is possible to have a relatively small subgraph with a large (constant fraction of the total) contribution to the graph parameter in question. Thus, in order to approximately compute the parameter, a large portion of the graph must be sampled.

%% For each of the problems listed above, our argument uses a reduction from the well-studied disjointness function in the two party communication model. Suppose two parties, Alice and Bob, hold subsets $A$ and $B$ (respectively) of $[N] = \set{1,2,\ldots,N}$. A fundamental result in communication complexity states that in order for Alice and Bob to determine whether or not their sets are disjoint with constant probability bounded away from $1/2$, they must exchange $\Omega(N)$ bits. Informally, one should expect that the disjointness function is hard (i.e., requires a lot of communication) because an element in $A \cap B$ could be hidden anywhere in Alice's and Bob's inputs---essentially all possibilities must be checked. This idea is conceptually similar to the reason that the graph parameters above are hard to estimate: a small set of vertices with large contribution may be hidden anywhere in the graph (as was noticed by~\cite{Eden2017}). Our reductions make the connection between set disjointness and approximating graph parameters explicit, and reveal a unifying principle underlying the inherent difficulty in solving these problems. 

\subsection{Related Work}

A model for query-based sublinear graph algorithms was first presented in the seminal work of Goldreich, Goldwasser, and Ron~\cite{Goldreich1998} in the context of property testing. Their model is appropriate for dense graphs, as only ``pair queries'' (i.e., queries of the form \emph{``Do vertices $u$ and $v$ share an edge?''}) are allowed. An analogous model for bounded degree graphs was introduced by Goldreich and Ron in~\cite{GR02}. In this model, it is assumed that all vertices have degree at most $\Delta$, and the graph is accessed via neighbor queries (\emph{``Who is $v$'s $i\th$ neighbor?''} for $i \leq \Delta$). A similar model for sparse graphs was introduced by Parnas and Ron in~\cite{Parnas2002}, which does not assume that the maximum degree in the graph is bounded, and additionally allows degree queries (\emph{``What is $v$'s degree?''}). Kaufman, Krivelevich and Ron~\cite{KKR04} introduced the general graph model which allows all of the above queries---pair, degree and neighbor queries. The lower bounds we prove all apply to the general graph model.

%% Estimating graph parameters
%% - Edge Counting (Goldriech & Ron, Feige)
%% - Edge Sampling (Eden & Rosenbaum)
%% - $k$-cliques (and triangles)
%% - Degree distribution

The problem of estimating the average degree of a graph (or equivalently, the number of edges in a graph) was first studied by Feige~\cite{feige06}. In~\cite{feige06}, Feige proves tight bounds on the number of degree queries necessary to estimate the average degree. In~\cite{Goldreich2008}, Goldreich and Ron study the same problem, but in a model that additionally allows neighbor queries. In this model, they prove matching upper and lower bounds for the number of queries needed to estimate the average degree. In Corollary~\ref{cor:edge-counting} we achieve the same lower bound as~\cite{Goldreich2008} for estimating the number of edges in a graph (their lower bound as well as ours also holds when allowing for pair queries). The related problem of sampling edges from an almost-uniform distribution was recently studied by Eden and Rosenbaum in~\cite{Eden2017}. They prove tight bounds on the number of queries necessary to sample an edge in a graph from an almost-uniform distribution. In Theorem~\ref{thm:edge-sampling} we present a new derivation of the lower bound presented in~\cite{Eden2017}.

Eden et al.\ \cite{Eden2015} prove tight bounds on the number of queries needed to estimate the number of triangles. Their results were generalized by Eden et al.\ in~\cite{CLIQUES} to approximating the number of $r$-cliques for any $r \geq 3$. In Corollary~\ref{cor:cliques} and Theorem~\ref{thm:triangles}, we present a new derivation of the lower bound of~\cite{Eden2015} for estimating the number of triangles. In Appendix~\ref{sec:cliques} (Theorem~\ref{thm:cliques_second}), we generalize the triangle lower bound construction to obtain a lower bound for $r$-cliques matching that of~\cite{CLIQUES}.

In~\cite{GRS11}, Gonen et al.\ study the problem of approximating the number of $s$-star subgraphs, and give tight bounds on the number of (degree, neighbor, pair) queries needed to solve this problem. As noted by Eden et al.~\cite{DegMoments}, counting $s$-stars is closely related to computing the $s\th$ moment of the degree sequence. In~\cite{DegMoments}, the authors provide a simpler optimal algorithm for computing the $s\th$ moment of the degree sequence that has better or matching query complexity when the algorithm is also given an upper bound on the arboricity of the graph. In Theorems~\ref{thm:moments_1term} and~\ref{thm:moments_2term} we prove the lower bounds of~\cite{DegMoments}, which build upon and generalize the lower bounds of~\cite{GRS11}.

The recent paper of Aliakbarpour et al.~\cite{Aliak} proposes an algorithm for estimating the number of $s$-star subgraphs that is allowed uniformly random edge samples as a basic query as well as degree queries. Interestingly, the additional computational power afforded by random edge queries allows their algorithm to break the lower bound of~\cite{GRS11}. We remark that our lower bounds for estimating the number of triangles (Theorem~\ref{thm:triangles}) and edge connectivity (Theorem~\ref{thm:connectivity}) still hold in this stronger query access model.

\section{Preliminaries}
\label{sec:preliminaries}

\subsection{Graph Query Models}
\label{sec:query-model}
Let $G = (V, E)$ be a graph where $n = \abs{V}$ is the number of vertices and $m = \abs{E}$ is the number of edges. We assume that the vertices $V$ are given distinct labels, say, from $[n] = \set{1, 2, \ldots, n}$. For $v \in V$, let $\Gamma(v)$ denote the set of neighbors of $v$, and $\deg(v) = \abs{\Gamma(v)}$ is $v$'s degree. For each $v \in V$, we assume that $\Gamma(v)$ is ordered by specifying some arbitrary bijection $\Gamma(v) \to [\deg(v)]$ so that we may refer unambiguously to $v$'s $i\th$ neighbor. We let $\calG_n$ denote the set of all graphs on $n$ vertices, together will all possible labelings of the vertices (from $[n]$) and all orderings of the neighbors of each vertex, and we define $\calG = \bigcup_{n \in \N} \calG_n$.

We consider algorithms that access $G$ via queries. In general, a \dft{query} is an arbitrary function $q : \calG \to \set{0, 1}^*$. We are interested in the following question: \emph{``Given a set $Q$ of allowable queries and a graph problem $g$ (e.g. a computing function, estimating a graph parameter, etc.), how many queries $q \in Q$ are necessary to compute $g$?''}

Since we associate the vertex set $V$ with the set $[n]$, we allow algorithms to have free access to the vertex set of the graph\footnote{In the sparse and general graph models of property testing, it is often assumed that the identities of the vertices are not known in advance. Rather, algorithms may sample vertices from a uniform distribution or discover new vertices that are neighbors of known vertices. Since the current paper aims to prove lower bounds, the assumption that the identities of vertices are known in advance is harmless.}---algorithms are only charged for obtaining information about the edges of a graph. We focus on models that allow the following types of queries:
\begin{enumerate}
\item \dft{degree query} $\degree : V \to [n - 1]$, where $\degree(v)$ returns $v$'s degree.
\item \dft{neighbor query} $\neighbor_i : V \to V \cup \set{\varnothing}$ for $i\in[n-1]$, where $\neighbor_i(v)$ returns $v$'s $i\th$ neighbor if $i \leq \deg(v)$ and $\varnothing$ otherwise.
\item \dft{pair query} $\pair : V \times V \to \set{0, 1}$, where $\pair(u, v)$ returns $1$ if $(u, v) \in E$ and $0$ otherwise $(u, v) \notin E$. 
\end{enumerate}
Taking $Q$ to be the set of all neighbor, degree, and pair queries, we have $\abs{Q} = O(n^2)$. This query model is known as the \dft{general graph model} introduced in~\cite{KKR04}.\footnote{Allowing only neighbor queries while assuming the graphs has maximal degree $d$, and considering the distance with respect to $n\cdot d$ is known as the ``bounded degree graph'' model, while only allowing pair queries and considering the distance with respect to $n^2$ is the ``dense graph'' model.} In Theorems~\ref{thm:triangles} and~\ref{thm:connectivity}, we also consider an expanded model that allows random edges to be sampled from a uniform distribution (cf.~\cite{Aliak}).

We wish to characterize the \dft{query complexity} of graph problems, that is, the minimum number of queries necessary to solve the problem. We consider randomized algorithms, and we assume the randomness is provided via a random string $\rho \in \set{0, 1}^\N$. Since the query complexity of the various estimation problems we consider depends on the measure being estimated, we use the expected query complexity, rather than the worst case query complexity.\footnote{In communication complexity, it is customary to use worst-case complexity when analyzing randomized protocols. This is done without loss of generality, as protocol with a given expected communication cost can be converted to a protocol with asymptotically equal worst-case communication cost and slightly higher error probability. Such a transformation from expected to worst-case lower bounds is not generally possible for query complexity when the expected cost of a protocol depends on the parameter being estimated.}

Most of our results are lower bounds on the number of expected queries necessary to estimate graph parameters.

\begin{dfn}
  \label{dfn:graph-param}
  A \dft{graph parameter} is a function $g : \calG \to \R$ that is invariant under any permutation of the vertices of each $G \in \calG$. Formally, $g$ is a graph parameter if for every $n \in \N$, $G = (V, E) \in \calG_n$ and every permutation  $\pi : [n] \to [n]$, the graph $G_\pi = (V, E_\pi)$ defined by $(v_{\pi(i)}, v_{\pi(j)}) \in E_\pi \iff (v_i, v_j) \in E$ satisfies $g(G_\pi) = g(G)$. 
\end{dfn}

\begin{dfn}
  \label{dfn:approx}
  Let $g : \calG_n \to \R$ be a graph parameter, $\mA$ an algorithm, and $\e > 0$. We say that say that $\mA$ \dft{computes a (multiplicative) {$\bm{(1\pm\e)}$}-approximation} of $g$ if for all $G \in \calG$, the output of $\mA$ satisfies $\Pr_r(\abs{\mA(G) - g(G)} \leq \e g(G)) \geq 2/3$. Here, the probability is taken over the random choices of the algorithm $\mA$ (i.e., over the random string $\rho$). 
\end{dfn}

\begin{rem}
  In the general graph model, every graph $G$ can be explored using $O(\max\set{n, m})$ queries, for example, by using depth first search. Thus, we are interested in algorithms that perform $o(\max\set{n, m})$---or even better, $o(n)$---queries.
\end{rem}

\subsection{Communication Complexity Background}
\label{sec:cc}
%\wnote{We should probably change all of the $n$s in this section to $N$s..}

In this section, we briefly review some background on two party communication complexity and state a fundamental lower bound for the disjointness function. We refer the reader to~\cite{Kushilevitz2006} for a detailed introduction. %We then describe our technique for applying communication complexity to achieve query lower bound for graph problems, and prove a general result.

%\subsection{Communication Complexity}

We consider two party communication complexity in the following setting. Let $f : \set{0, 1}^N \times \set{0, 1}^N \to \set{0, 1}$ be a Boolean function. Suppose two parties, traditionally referred to as Alice and Bob, hold $x$ and $y$, respectively, in $\set{0, 1}^N$. The \emph{(randomized) communication complexity}\footnote{Throughout this paper, all algorithms and protocols are assumed to be randomized.} of $f$ is the minimum number of bits that Alice and Bob must exchange in order for both of them to learn the value $f(x, y)$.

More formally, let $\Pi$ be a communication protocol between Alice and Bob. We assume that $\Pi$ is randomized, and that Alice and Bob have access to a shared random string, $\rho$. We say that $\Pi$ \dft{computes} $f$ if for all $x, y \in \set{0, 1}^N$, $\Pr_{\rho} [\Pi(x, y) = f(x, y)] \geq 2/3$, where the probability is taken over all random strings $\rho$. For fixed inputs $x, y \in \set{0, 1}^N$ and random string $\rho$, we denote the number of bits exchanged by Alice and Bob using $\Pi$ on input $(x, y)$ and randomness $\rho$ by $\abs{\Pi(x, y; \rho)}$. The \dft{(expected) communication cost}\footnote{It is more common in the literature to define the cost in terms of the worst case random string $\rho$ rather than expected. However, for our purposes (since we consider the expected query complexity) it will be more convenient to use expected cost. We allow our protocols to err with (small) constant probability, so this difference only affects the communication complexity by a constant factor.} of $\Pi$ is defined by 
\[
\cost(\Pi) = \sup_{x, y} \E_\rho(\abs{\Pi(x, y; \rho)}).
\]
Finally, the \dft{(randomized) communication complexity} of $f$, denoted $R(f)$, is the minimum cost of any protocol that computes $f$:
\[
R(f) = \min \set{\cost(\Pi) \sucht \Pi \text{ computes } f}.
\]

The notion of communication complexity extends to partial functions in a natural way. That is, we may restrict attention to particular inputs for $f$ and allow $\Pi$ to have arbitrary output for all other values. Formally, we model this extension to partial functions via \dft{promises} on the input of $f$. Let $P \subseteq \set{0, 1}^N \times \set{0, 1}^N$. We say that a protocol $\Pi$ computes $f$ for the promise $P$ if for all $(x, y) \in P$, $\Pr_\rho(\Pi(x, y) = f(x, y)) \geq 2/3$. The communication complexity of a promise problem (or equivalently, a partial function) is defined analogously to the paragraph above.
  
One of the fundamental results in communication complexity is a linear lower bound for the communication complexity of the disjointness function. Suppose Alice and Bob hold subsets $A, B \subseteq [N]$, respectively. The disjointness function takes on the value $1$ if $A \cap B = \varnothing$, and $0$ otherwise. By associating $A$ and $B$ with their characteristic vectors in $\set{0, 1}^N$ (i.e. $x_i = 1 \iff i \in A$ and $y_j = 1 \iff j \in B$), we can define the disjointness function as follows.

\begin{dfn}
  \label{dfn:disj}
  For any $x, y \in \set{0, 1}^N$, the \dft{disjointness function} is defined by the formula
  \[
  \disj(x, y) = \neg \bigvee_{i = 1}^N x_i \wedge y_i.
  \]
\end{dfn}

The following lower bound for the communication complexity of $\disj$ was initially proved by Kalyansundaram and Schintger~\cite{Kalyanasundaram1992} and independently by Razborov~\cite{Razborov1992}. All of the results presented in this paper rely upon this fundamental lower bound.

\begin{thm}[\cite{Kalyanasundaram1992, Razborov1992}]
  \label{thm:disj-lb}
  The randomized communication complexity of the disjointness function is $R(\disj) = \Omega(N)$. This result holds even if $x$ and $y$ are promised to satisfy $\sum_{i = 1}^N x_i y_i \in \set{0, 1}$---that is, Alice's and Bob's inputs are either disjoint or intersect on a single point.
\end{thm}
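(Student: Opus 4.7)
The plan is to follow Razborov's rectangle-corruption method on a carefully chosen hard distribution supported on the unique-disjointness promise set. By Yao's minimax principle, it suffices to exhibit a distribution $\mu$ over promise inputs such that every deterministic protocol of error at most $1/3$ under $\mu$ has cost $\Omega(N)$; a worst-case-to-expected-case conversion then yields the expected-cost bound as claimed in the paper's framework. I would take $\mu = \alpha\, \mu_{\text{yes}} + (1-\alpha)\, \mu_{\text{no}}$ for an appropriate constant $\alpha \in (0,1)$, where $\mu_{\text{yes}}$ is uniform over pairs $(A,B) \in \binom{[N]}{t}^2$ with $A \cap B = \varnothing$ for some $t = \Theta(N)$, and $\mu_{\text{no}}$ is uniform over such pairs with $|A \cap B| = 1$. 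Both components are supported on the promise set, so any lower bound against $\mu$ automatically gives the promise version.

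The key structural fact is that a deterministic protocol of cost $c$ partitions $\set{0,1}^N \times \set{0,1}^N$ into at most $2^c$ combinatorial rectangles $R = \mathcal{X} \times \mathcal{Y}$ on which the output is constant; call $R$ a ``$1$-rectangle'' if the protocol outputs $1$ on $R$. The heart of the argument is a corruption lemma: there exist absolute constants $\rho > 2$ and $\beta > 0$ such that for every rectangle $R$,
\[
\mu_{\text{no}}(R) \;\geq\; \rho\, \mu_{\text{yes}}(R) \;-\; 2^{-\beta N}.
\]
Summing this over all $1$-rectangles, together with the facts that a $1/3$-error protocol must place an $\Omega(1)$ fraction of the $\mu_{\text{yes}}$-mass on $1$-rectangles but at most an $O(1)$ fraction of the $\mu_{\text{no}}$-mass there, yields
\[
\Omega(1) \;\leq\; O(1) \;+\; 2^c \cdot 2^{-\beta N},
\]
which, after choosing $\rho$ large relative to the error constants, forces $c = \Omega(N)$.

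The main obstacle is proving the corruption lemma, which is the combinatorial heart of the argument. The intuition is that if a rectangle $R = \mathcal{X} \times \mathcal{Y}$ contains many disjoint pairs---so that the $A$'s in $\mathcal{X}$ and the $B$'s in $\mathcal{Y}$ are combinatorially rich and mostly avoid each other---then a volume/isoperimetric argument forces $\mathcal{X} \times \mathcal{Y}$ to contain a comparable number of pairs with exactly one shared element, obtained by flipping a single coordinate in a random yes-instance. Razborov formalizes this via a weighted density comparison between slices of $R$ indexed by the intersection size, using spectral estimates for the Johnson scheme. I would present the high-level reduction and the corruption-sum counting argument in full, but defer the isoperimetric calculations to~\cite{Razborov1992}, noting that the combinatorial ``matching''-style argument of~\cite{Kalyanasundaram1992} gives the same bound via a different route.
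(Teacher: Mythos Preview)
The paper does not prove this theorem at all: it is stated as a black box with citations to~\cite{Kalyanasundaram1992, Razborov1992}, and is then used as the foundational primitive for all subsequent reductions. So there is no ``paper's own proof'' to compare against.

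Your proposal is a faithful high-level outline of Razborov's corruption argument, and as a sketch it is correct. The hard distribution, the rectangle decomposition, and the corruption-sum counting are all as in the standard presentation. The one place where you openly punt---the isoperimetric/spectral estimate that actually establishes the corruption lemma---is the entire technical content of the theorem, so what you have written is more of a roadmap than a proof. That is fine in context, since the paper itself treats the result as a citation; but be aware that if you were asked to \emph{prove} the theorem rather than summarize it, the work would lie precisely in the step you defer.
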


The promise in Theorem~\ref{thm:disj-lb} is known as \dft{unique intersection}. We will also use a variant of the unique intersection problem that we refer to as the \emph{\kinter{} problem}.

\begin{dfn}
  \label{dfn:k-intersection}
  Let $x, y \in \set{0, 1}^N$. We say that $x$ and $y$ are \dft{{$\bm{k}$}-intersecting} if $\sum_{i = 1}^N x_i y_i \geq k$. The \kinter{} function is defined by the formula
  \[
  \inter_k(x, y) =
  \begin{cases}
    1 & \text{if } \sum_i x_i y_i \geq k\\
    0 & \text{otherwise.}
  \end{cases}
  \]
\end{dfn}

We now prove the following consequence of Theorem~\ref{thm:disj-lb}.

\begin{cor}
  \label{cor:k-intersection}
  $R(\inter_k) = \Omega(N / k)$. The result holds even if $x$ and $y$ are promised to satisfy $\sum_i x_i y_i \in \set{0, k}$.
\end{cor}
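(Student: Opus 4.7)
The plan is to prove the corollary by a direct reduction from the unique-intersection version of \disj{} (Theorem~\ref{thm:disj-lb}) to \kinter{} under the promise $\sum_i x_i y_i \in \{0, k\}$. Specifically, given a protocol $\Pi'$ for $\inter_k$ on inputs of length $N$ (under the $\{0,k\}$ promise) with cost $C$, I will build a protocol $\Pi$ for $\disj$ on inputs of length $N' = \lfloor N/k \rfloor$ (under the $\{0,1\}$ promise) with the same cost $C$. Theorem~\ref{thm:disj-lb} then gives $C = \Omega(N') = \Omega(N/k)$.

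The reduction is a local ``blow-up'' that requires no communication. Suppose Alice holds $x \in \{0,1\}^{N'}$ and Bob holds $y \in \{0,1\}^{N'}$ with the unique-intersection promise $\sum_{i=1}^{N'} x_i y_i \in \{0,1\}$. Alice locally constructs $x' \in \{0,1\}^N$ by replacing each coordinate $x_i$ with $k$ consecutive copies of itself (and padding the remaining $N - k N'$ coordinates with zeros), and Bob constructs $y'$ from $y$ in the same way. Then for every index $i \in [N']$, the corresponding block contributes exactly $k \cdot x_i y_i$ to $\sum_j x'_j y'_j$, so
\[
  \sum_{j=1}^N x'_j y'_j \;=\; k \sum_{i=1}^{N'} x_i y_i \;\in\; \{0, k\}.
\]
In particular $(x', y')$ satisfies the promise for $\inter_k$, and $\inter_k(x', y') = 1$ iff $\disj(x, y) = 0$. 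Alice and Bob now simulate $\Pi'$ on $(x', y')$ and output the negation of its answer; this yields a protocol for \disj{} with the same error probability and expected communication cost as $\Pi'$.

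Combining this simulation with Theorem~\ref{thm:disj-lb} gives $R(\inter_k) \geq R(\disj \text{ on } N' \text{ bits}) = \Omega(N/k)$, as required. No step here is really an obstacle: the only thing to be mildly careful about is that the promise on the \disj{} side is preserved on the $\inter_k$ side (which it is, since a single collision blows up to exactly $k$ collisions and no collision blows up to none), and that the cost is expected rather than worst case (which is preserved trivially since the reduction uses zero additional communication).
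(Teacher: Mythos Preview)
Your proof is correct and essentially identical to the paper's: both reduce from unique-intersection \disj{} on $\Theta(N/k)$ coordinates by replicating each coordinate $k$ times to obtain an $\inter_k$ instance satisfying the $\{0,k\}$ promise, then invoking Theorem~\ref{thm:disj-lb}. Your version is slightly more careful about divisibility (padding with zeros), but otherwise the argument matches.
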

\begin{proof}
  The argument is by simulation. Specifically, we will show that any efficient protocol for $\inter_k$ yields an efficient protocol for $\disj$. Suppose $\Pi$ is a protocol for the promise problem of the corollary with $\cost(\Pi) = B$. For $x, y \in \set{0, 1}^{N / k}$, let $x^k, y^k \in \set{0, 1}^N$ denote the concatenation of $x$ and $y$ (respectively) repeated $k$ times. Observe that if $x$ and $y$ satisfy the unique intersection promise, then $x^k$ and $y^k$ satisfy the \kinter{} promise. Further, $\inter_k(x^k, y^k) = 0$ if and only if $\disj(x, y) = 1$. Since $\Pi$ computes $\inter_k$ for all $x', y' \in \set{0, 1}^N$ satisfying the \kinter{} promise, $\Pi(x^k, y^k)$ computes $\neg \disj$ on input $x, y$. Therefore, by Theorem~\ref{thm:disj-lb}, $\cost(\Pi) = \Omega(N / k)$, which gives the desired result.
\end{proof}

%%%%%%%%%%%%%%%%%%%%%%%%%%%%%%%%%%%%%%%%%%%%%%%%%%%%%%%%%%%%%%%%%%%%%%%%%%%%%%%%
%% General Lower Bounds
%%%%%%%%%%%%%%%%%%%%%%%%%%%%%%%%%%%%%%%%%%%%%%%%%%%%%%%%%%%%%%%%%%%%%%%%%%%%%%%%

\section{General Lower Bounds}
\label{sec:general-lb}
In this section, we describe a framework for obtaining general query lower bounds from communication complexity. Let $\calG_n$ denote the family of graphs on the vertex set $V = [n]$, which we assume have labels $1$ through $n$. We will use $g : \calG_n \to \set{0, 1}$ to denote a Boolean function on $\calG_n$. 

%\wnote{We actually don't want $g$ to be a graph property---it should be allowed to depend on the identities of vertices.}

\begin{dfn}
  \label{dfn:embedding}
  Let $P \subseteq \set{0, 1}^N \times \set{0, 1}^N$. Suppose $f : P \to \set{0, 1}$ is an arbitrary (partial) function, and let $g$ be a Boolean function on $\calG_n$. Let $\calE : \set{0, 1}^N \times \set{0, 1}^N \to \calG_n$. We call the pair $(\calE, g)$ an \dft{embedding} of $f$ if for all $(x, y) \in P$ we have $f(x, y) = g(\calE(x, y))$. 
\end{dfn}

For a general embedding $(\calE, g)$ of a function $f$, the edges of $\calE(x, y)$ can depend on $x$ and $y$ in an arbitrary way. In order for the embedding to yield meaningful lower bounds, however, each allowable query $q$ should be computable from $x$ and $y$ with little communication.

%% However, it will often be desireable to have the property that small changes in $x$ or $y$ give correspondingly small changes in the graph $\calE(x, y)$.

%% In particular, in previous uses of graph embeddings that we know of, whether or not $\calE(x, y)$ contained a particular edge $(u, v)$ with $u, v \in V$ was determined by $x$ or $y$ alone, and was not a joint function of $x$ and $y$. Crucially, our general lower bound below allows edges to be a function of both $x$ and $y$, so long as the embedding allows efficient simulation of queries to $\calE(x, y)$ directly from $x$ and $y$.

\begin{dfn}
  \label{dfn:query-cost}
  Let $q : \calG_n \to \set{0, 1}^*$ be a query and $(\calE, g)$ an embedding of $f$. We say that $q$ has \dft{communication cost} at most $B$ and write $\cost_{\calE}(q) \leq B$ if there exists a (zero-error) communication protocol $\Pi_q$ such that for all $(x, y) \in P$ we have $\Pi_q(x, y) = q(\calE(x, y))$ and $\abs{\Pi_q(x, y)} \leq B$.
\end{dfn}

\begin{thm}
  \label{thm:query-lb}
  Let $Q$ be a set of allowable queries, $f : P \to \set{0, 1}$, and $(\calE, g)$ an embedding of $f$. Suppose that each query $q \in Q$ has communication cost $\cost_{\calE}(q) \leq B$. Suppose $\mA$ is an algorithm that computes $g$ using $T$ queries (in expectation) from $Q$. Then the expected query complexity of $\mA$ is $T = \Omega(R(f) / B)$.
\end{thm}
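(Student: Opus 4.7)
The plan is a standard simulation argument: use the embedding $(\calE, g)$ together with the per-query protocols $\Pi_q$ to turn the query algorithm $\mA$ into a communication protocol $\Pi_\mA$ for the (partial) function $f$, and then invoke the definition of $R(f)$ to get the lower bound. Concretely, Alice and Bob share a random string $\rho$ (the randomness of $\mA$) and on input $(x, y) \in P$ simulate the execution of $\mA$ on the graph $\calE(x, y)$: each time $\mA$ issues a query $q \in Q$, Alice and Bob jointly run the protocol $\Pi_q$ (guaranteed by the cost hypothesis) to compute $q(\calE(x, y))$; this value is fed back to $\mA$ as the query response, and the simulation continues until $\mA$ halts with an output, which both parties then declare as $\Pi_\mA(x, y; \rho)$.

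For correctness, note that because each $\Pi_q$ is a zero-error protocol returning $q(\calE(x,y))$ exactly, the entire simulated transcript of $\mA$ coincides with an actual execution of $\mA$ on input $\calE(x, y)$ with randomness $\rho$. Hence $\Pi_\mA(x, y; \rho) = \mA(\calE(x, y); \rho)$, and by the success guarantee of $\mA$ together with $g(\calE(x,y)) = f(x,y)$, we have $\Pr_\rho[\Pi_\mA(x, y) = f(x, y)] \geq 2/3$ for every $(x, y) \in P$. Thus $\Pi_\mA$ is a valid protocol for the promise problem $f$.

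To bound the cost, let $T(x, y; \rho)$ be the number of queries $\mA$ makes on input $\calE(x,y)$ with randomness $\rho$, so that by hypothesis $\sup_{x, y} \E_\rho[T(x, y; \rho)] \leq T$. Each simulated query contributes at most $B$ bits of communication (deterministically, by Definition~\ref{dfn:query-cost}), so $|\Pi_\mA(x, y; \rho)| \leq B \cdot T(x, y; \rho)$. Taking expectation over $\rho$ and supremum over $(x, y) \in P$ yields
\[
\cost(\Pi_\mA) \;=\; \sup_{x, y} \E_\rho\bigl[|\Pi_\mA(x, y; \rho)|\bigr] \;\leq\; B \cdot \sup_{x, y} \E_\rho[T(x, y; \rho)] \;\leq\; B \cdot T.
\]
Since $\Pi_\mA$ computes $f$, the definition of $R(f)$ gives $R(f) \leq \cost(\Pi_\mA) \leq BT$, i.e., $T = \Omega(R(f)/B)$, as claimed.

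I do not anticipate a serious obstacle here; the argument is essentially a bookkeeping exercise. The only point that requires mild care is matching the expected-cost conventions: the theorem's bound is on the expected number of queries, and $R(f)$ is defined via expected communication, so it is important that the per-query protocols $\Pi_q$ have a \emph{deterministic} bound $B$ on their length (as is the case by Definition~\ref{dfn:query-cost}), so that linearity of expectation cleanly converts the query bound into a communication bound without additional error or variance terms.
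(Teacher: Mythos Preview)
Your proof is correct and follows essentially the same simulation argument as the paper: both build a protocol for $f$ by having Alice and Bob jointly run $\mA$ on $\calE(x,y)$ using shared randomness, invoking the per-query protocols $\Pi_q$ to answer queries, and then bound the communication cost by $B$ times the expected number of queries. Your write-up is slightly more explicit about the expected-cost bookkeeping (taking expectation over $\rho$ and then the supremum over inputs), but the underlying idea is identical.
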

\begin{proof}
  Suppose $\mA$ computes $g$ using $T$ queries in expectation. From $\mA$ we define a two party communication protocol $\Pi_f$ for $f$ as follows. Let $x$ and $y$ denote Alice and Bob's inputs, respectively, and $\rho$ their shared public randomness. Alice and Bob both invoke $\mA$, letting their shared randomness $\rho$ be the randomness of $\mA$. Whenever $\mA$ performs a query $q$ that Alice or Bob cannot answer on their own, they communicate to the other party in order to determine the outcome of the query.\footnote{Since the randomness of $\mA$ is the shared randomness of Alice and Bob they both witness the same execution of $\mA$ and agree \emph{without communication} on which query is being performed during each step. Further, since they both know the function $\calE$, they can individually determine if a query $q$ cannot be answered by the other party. In this case, they invoke $\Pi_q$.} That is, they invoke $\Pi_{q}$ in order to compute the response $a$ to query $q$. The protocol terminates when $\mA$ halts and returns an answer $\mA(G)$, at which point Alice and Bob determine their answer to $f$ according to $\mA(G)$.

  Since $\Pr_\rho(\mA(G) = g(G)) \geq 2/3$, and $g(G) = f(x, y)$ it is clear that $\Pi_f$ computes $f$. Further, since each $\Pi_q$ satisfies $\abs{\Pi_q} \leq B(q)$, we have $\cost(\Pi_f) = 2 B\cdot T$. Since $\cost(\Pi_f) \geq R(f)$, we have $T \geq R(f) / 2 B$, as desired.
\end{proof}

Given the above Theorem, we suggest the following framework for proving graph query lower bounds.
\begin{enumerate}
\item Choose a ``hard'' communication problem $f : P \to \set{0,1}$.
\item Define functions $\calE : P \to \calG_n$ and $g : \calG_n \to \set{0, 1}$ such that $(\calE, g)$ is an embedding of $f$ in the sense of Definition~\ref{dfn:embedding}.
\item For each allowable query $q\in Q$, bound $B$, the number of bits that must be exchanged in order to simulate $q$ given $\calE$.
\end{enumerate}

%%%%%%%%%%%%%%%%%%%%%%%%%%%%%%%%%%%%%%%%%%%%%%%%%%%%%%%%%%%%%%%%%%%%%%%%%%%%%%%%
%% Lower Bounds
%%%%%%%%%%%%%%%%%%%%%%%%%%%%%%%%%%%%%%%%%%%%%%%%%%%%%%%%%%%%%%%%%%%%%%%%%%%%%%%%

\section{Lower Bounds for Particular Problems}

In this section, we derive lower bounds for particular problems. In all cases, we allow $Q$ to be the family of all degree, neighbor, and pair queries. In the case of the previously known lower bounds (estimating the number of edges, cliques, $s\th$-moment and sampling an edge from an almost uniform distribution) the graph constructions are similar or identical to the lower bound constructions in the original works. Our contribution is in the simplicity of the analysis.

\subsection{Counting Subgraphs}
\label{sec:subgraph-counting}
Let $G = (V, E)$ be a graph and $H = (V_H, E_H)$ be a fixed graph with $\abs{V_H} = k$. We denote the number of instances of $H$ in $G$ by $h_H(G)$. That is, $h_H(G)$ is the number of subgraphs $G' = (V', E')$ with $V' \subseteq V$, $E' \subseteq E$, and $\abs{V'} = k$ such that $G'$ is isomorphic to $H$.

\begin{thm}
  \label{thm:subgraph-counting}
  Let $k$ be a fixed constant, $H = (V_H, E_H)$ a fixed graph on $k$ vertices ($\abs{V_H} = k$), and $G'$ any graph on $n/2$ vertices. For any $\mu \leq \binom{n/2}{k}$, any algorithm $\mA$ that distinguishes between graphs $G$ on $n$ vertices satisfying $h_H(G) = h_H(G')$ and $h_H(G) = h_H(G') + \mu$ requires $\Omega(n / \mu^{1/k})$ degree, neighbor, or pair queries in expectation.
\end{thm}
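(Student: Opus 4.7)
The plan is to apply Theorem~\ref{thm:query-lb} via a reduction from the unique-intersection disjointness problem (Theorem~\ref{thm:disj-lb}) on inputs of length $N = \Theta(n/\mu^{1/k})$, whose randomized communication complexity is $\Omega(n/\mu^{1/k})$. I would construct an embedding $\calE$ mapping disjointness instances $(x, y) \in \set{0,1}^N \times \set{0,1}^N$ to $n$-vertex graphs so that $h_H(\calE(x,y)) = h_H(G')$ when $x$ and $y$ are disjoint and $h_H(\calE(x,y)) = h_H(G') + \mu$ when they uniquely intersect.

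Concretely, reserve $n/2$ vertices to host a labeled copy of $G'$, contributing the baseline $h_H(G')$ copies. Partition the remaining $n/2$ vertices into $N$ vertex-disjoint gadgets, each consisting of $k$ groups $V_1^i, \ldots, V_k^i$ of size $s = \Theta(\mu^{1/k})$, with $s$ tuned so that a complete $H$-blowup on these groups contains exactly $\mu$ copies of $H$. Within gadget $i$, place the $H$-blowup edges \emph{gated} by the inputs: fix a partition $E(H) = E_A \sqcup E_B$ into two nonempty classes, have Alice include the $E_A$-blowup edges of gadget $i$ iff $x_i = 1$, and have Bob include the $E_B$-blowup edges iff $y_i = 1$. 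When the partition is chosen so that neither $E_A$-blowup nor $E_B$-blowup contains a copy of $H$ on its own, partially-activated gadgets (those with $x_i = 0$ or $y_i = 0$) contribute no $H$-copies, so the single fully-activated gadget in the intersecting case contributes exactly $\mu$ additional copies. For the query-simulation cost, every pair, degree, or neighbor query either concerns only the (publicly known) $G'$ or concerns a single gadget $i$ whose entire structure is a deterministic function of $(x_i, y_i)$; Alice and Bob can simulate any such query by exchanging these two bits, giving $B = O(1)$. Theorem~\ref{thm:query-lb} then yields $T = \Omega(R(\disj)/B) = \Omega(n/\mu^{1/k})$.

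The main obstacle is constructing an edge partition $E(H) = E_A \sqcup E_B$ such that neither half's blowup contains $H$ as a subgraph. For non-bipartite $H$, taking $E_A$ to be a single edge works: the $E_A$-blowup is then a biclique $K_{s,s}$ together with isolated groups, which contains no non-bipartite subgraph and so no copy of $H$, while any would-be $H$-copy in the $E_B$-blowup must pick one vertex from each group (for lack of intra-group edges) and therefore spans only the strict subgraph $H - E_A$. For bipartite $H$, the biclique $K_{s,s}$ contains every small bipartite graph, so a single-edge partition fails and a more elaborate gadget is required; I anticipate adapting the two-family construction underlying the $r$-clique lower bound (and the triangle construction of Section~\ref{sec:triangles}) by duplicating each gadget into an Alice-half and a Bob-half that can only combine to form $H$-copies at the unique intersection. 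I would treat the general case by a short case analysis on the structure of $H$, reducing bipartite $H$ to this two-family template and non-bipartite $H$ to the simple partition above, while maintaining the $O(1)$-communication property of each query in both cases.
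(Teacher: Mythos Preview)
Your overall strategy—embed $\disj$ with $N=\Theta(n/\mu^{1/k})$ into $n$-vertex graphs and invoke Theorem~\ref{thm:query-lb}—matches the paper's. But the paper's gadget is much simpler than yours: instead of an $H$-blowup with a delicate edge partition $E(H)=E_A\sqcup E_B$, the paper simply makes each gadget $K_j$ a \emph{clique} on $\ell=\Theta(\mu^{1/k})$ vertices when $x_j=y_j=1$, and an independent set otherwise. A clique $K_\ell$ contains at least $\binom{\ell}{k}\geq\mu$ copies of \emph{any} $k$-vertex graph $H$, so no case analysis on the structure of $H$ is needed; the bipartite case that stymies you disappears entirely. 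Query simulation is identical to what you describe.

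Your blowup construction also has a concrete error in the non-bipartite case. You assert that any $H$-copy inside the $E_B$-blowup ``must pick one vertex from each group \ldots\ and therefore spans only the strict subgraph $H-E_A$.'' This is false: an $H$-copy in the blowup of $H-e$ only needs an injective \emph{homomorphism} $H\to H-e$, not a bijection on groups. Take $H$ to be a triangle with a pendant edge (vertices $a,b,c$ forming the triangle, $d$ adjacent to $c$), and let $E_A=\{cd\}$. The $E_B$-blowup is the blowup of a triangle plus an isolated group; placing $a,b,c$ in their own groups and $d$ in (say) the $a$-group yields a genuine copy of $H$ as soon as the groups have size at least $2$. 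So partially-activated gadgets can contribute extra $H$-copies, breaking the correctness of your embedding. The clique gadget sidesteps this issue because an empty gadget trivially contains no copy of $H$.
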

\begin{proof}
  We apply Theorem~\ref{thm:query-lb} with $f = \disj$, the disjointness function with input size $N = \Omega(n / \mu^{1/k})$. For fixed $n$ and $x, y \in \set{0, 1}^N$ we construct a graph $G = \calE(x, y)$ on $n$ vertices as follows. Take $V = \set{v_1, v_2, \ldots, v_n}$. Let $\ell$ be the smallest integer satisfying $\binom{\ell}{k} \geq \mu$ so that $\ell = O(\mu^{1/k})$. Take $N = n / 2 \ell$. We partition the first $n / 2$ vertices into $N$ sets of size $\ell$ which we denote $K_1, K_2, \ldots, K_{N}$. That is
  \[
  K_j = \set{v_{j(\ell - 1) + 1}, v_{j(\ell - 1) + 2}, \ldots, v_{j \ell}}\;.
  \]
  The set of edges within $K_j$ is determined by $x_j$ and $y_j$. If $x_j = y_j = 1$, then $K_j$ is a clique. Otherwise, $K_j$ is a set of isolated vertices. Formally,
  \begin{equation}
    \label{eqn:clique-edge}
    \text{for all } u, v \in K_j,\ (u, v) \in E \iff x_j = y_j = 1\;.
  \end{equation}
  Edges are added to the remaining $n/2$ vertices of $V$ (i.e., vertices $V_2 = \set{v_{n/2 + 1}, \ldots, v_n}$) so that the induced subgraph on $V_2$ is isomorphic to $G'$. Finally, let $g : \calG_n \to \set{0, 1}$ be the (partial) function defined by
  \[
  g(G) =
  \begin{cases}
    1 &\text{if}\ h_H(G) \leq h_H(G')\\
    0 &\text{if}\ h_H(G) \geq h_H(G') + \mu
  \end{cases}
  \]

  We claim that $(\calE, g)$ is an embedding of $\disj$. To see this, first note that if $\disj(x, y) = 1$, then the condition of Equation~(\ref{eqn:clique-edge}) is never satisfied. Thus, $G$ is isomorphic to $G'$, plus $n/2$ isolated vertices. In particular, $h_H(G) = h_H(G')$. On the other hand, if $\disj(x, y) = 0$, then there exists $j \in [N]$ with $x_j = y_j = 1$. Thus $K_j$ is a clique on $\ell$ vertices, implying that $h_H(K_j) \geq \binom{\ell}{k} \geq \mu$. Therefore, $h_H(G) \geq h_H(G') + h_H(K_j) \geq h_H(G') + \mu$, and the claim follows.

  Finally, in order to apply Theorem~\ref{thm:query-lb}, we must show that each degree, neighbor, or pair query can be simulated by Alice and Bob (who know $x$ and $y$, respectively) using few bits. Let $u, v \in V$.
  \begin{description}
  \item[degree query] Notice that if $u \notin K_1 \cup \cdots \cup K_N$, then Alice and Bob can compute $d(u)$ with no communication, as $d(u)$ does not depend on $x$ or $y$. If $u \in K_j$, then Alice and Bob can compute $d(u)$ by exchanging $x_j$ and $y_j$, which requires $2$ bits.
  \item[neighbor query] Again, if $u \notin K_1 \cup \cdots \cup K_n$, Alice and Bob can compute $\nbr_i(u)$ without communication (by specifying some ordering on the edges of $G'$ ahead of time). For $u \in K_j = \set{v_{\ell(j - 1) + 1}, \ldots v_{\ell j}}$, Alice and Bob can again compute $\nbr_i(u)$ by exchanging $x_j$ and $y_j$. To this end, if $x_j = 0$ or $y_j = 0$, then $\nbr_i(u) = \varnothing$ for all $i$. If $x_j = y_j = 1$, then we can order the neighbors of $u = v_{\ell(j - 1) + z}$ as follows: the $i\th$ neighbor of $u$ is $v_{\ell(j - 1) + z + i}$, where the sum $z + i$ is computed modulo $\ell$.
  \item[pair query] The query $\pair(u, v)$ depends only on $x$ and $y$ if $u, v \in K_j$ for some $j$. In this case, $\pair(u, v) = 1$ if and only if $x_j = y_j = 1$. Thus Alice and Bob can simulate $\pair(u, v)$ with $2$ bits of communication.
  \end{description}
  Thus, all queries can be simulated using at most $2$ bits of communication between Alice and Bob. Therefore, by Theorem~\ref{thm:query-lb}, any algorithm that computes $g$ requires $\Omega(R(\disj) / B) = \Omega(N / 2) = \Omega(n / \ell) = \Omega(n / \mu^{1/k})$ queries, as desired.
\end{proof}

We now state some consequences of Theorem~\ref{thm:subgraph-counting}. 

\begin{cor}[Theorem~3.2 in~\cite{Goldreich2008}]
  \label{cor:edge-counting}
  Suppose $\mA$ is an algorithm that gives a $(1 + \e)$ multiplicative approximation to the number of edges in a graph using neighbor, degree, and pair queries. Specifically, for any $\e > 0$, on any input graph $G = (V, E)$ with $\abs{V} = n$ and $\abs{E} = m$, $\mA$ outputs an estimate $\mhat$ satisfying $\Pr(\abs{\mhat - m} < \e) \geq 2/3$. Then the expected query complexity of $\mA$ is $\Omega(n \big{/} \sqrt{\e m})$.
\end{cor}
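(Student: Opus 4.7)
The plan is to apply Theorem~\ref{thm:subgraph-counting} with the fixed subgraph $H = K_2$ (a single edge), so that $h_H(G)$ is exactly the number of edges of $G$, and $k = 2$.

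First I would reduce the multiplicative approximation problem to the distinguishing problem handled by Theorem~\ref{thm:subgraph-counting}. Given a target edge count $m$ with $m \leq \binom{n/2}{2}$ (the complementary dense regime is trivial since $n/\sqrt{\e m} = O(1)$ there), let $G'$ be any graph on $n/2$ vertices with exactly $m$ edges, and set $\mu = \lceil 3 \e m \rceil$. If $\mA$ is a $(1\pm\e)$-approximation algorithm for the edge count with error probability at most $1/3$, then on an input with $m'$ edges its output $\mhat$ lies in the interval $I_{m'} = [(1-\e) m',\, (1+\e) m']$ with probability at least $2/3$. For $\mu > 2\e m/(1-\e)$ (satisfied by our choice of $\mu$ for small $\e$), the intervals $I_m$ and $I_{m+\mu}$ are disjoint, so thresholding $\mhat$ at the midpoint between $I_m$ and $I_{m+\mu}$ gives an algorithm $\mA'$ that distinguishes graphs with $m$ edges from graphs with $m+\mu$ edges with probability at least $2/3$, using the same number of queries as $\mA$ in expectation.

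Now I would invoke Theorem~\ref{thm:subgraph-counting} with this $G'$, $H = K_2$, and $\mu$. Since $\mu = \Theta(\e m) \leq \binom{n/2}{2}$ in the regime of interest, the theorem applies and gives
\[
T \;=\; \Omega\!\paren{\frac{n}{\mu^{1/k}}} \;=\; \Omega\!\paren{\frac{n}{\sqrt{\e m}}},
\]
as claimed.

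The only subtle point — which is really the single place one has to think — is the reduction from multiplicative approximation to the two-point distinguishing task: one must pick $\mu$ large enough (a constant factor times $\e m$) so that the confidence intervals of a valid approximator on the two hard instances are disjoint, while simultaneously small enough that $\mu^{1/2}$ still gives the target bound $\sqrt{\e m}$. Both requirements are met by $\mu = \Theta(\e m)$, so no further work is needed beyond quoting Theorem~\ref{thm:subgraph-counting}.
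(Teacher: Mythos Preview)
Your proposal is correct and follows essentially the same route as the paper: apply Theorem~\ref{thm:subgraph-counting} with $H$ a single edge ($k=2$), take $G'$ on $n/2$ vertices with $m$ edges, set $\mu = 3\e m$, and observe that a $(1\pm\e)$-approximator distinguishes the two cases. Your added remarks about the disjointness of the confidence intervals and the dense-regime triviality are sound elaborations of what the paper leaves implicit.
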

\begin{proof}
  Apply Theorem~\ref{thm:subgraph-counting} where $H$ is a graph consisting of two vertices connected by a single edge. Take $G'$ to be any graph on $n/2$ nodes with $m$ edges, and take $\mu = 3 \e m$. Observe that a $(1 \pm \e)$ multiplicative approximation to the number of edges in a graph distinguishes graphs with $m$ edges from those with at least $(1 + 3 \e) m$ edges for any $\e < 1/3$.
\end{proof}

\begin{cor}[cf.~\cite{Eden2015, CLIQUES}]
  \label{cor:cliques}
  Suppose $\mA$ is an algorithm that gives a $(1 + \e)$ multiplicative approximation to the number of $r$-cliques in a graph using neighbor, degree, and pair queries. Specifically, for any $\e > 0$, on any input graph $G = (V, E)$ with $\abs{V} = n$ containing $C_r$ $r$-cliques, $\mA$ outputs an estimate $\widehat{C_r}$ satisfying $\Pr(|\widehat{C_r} - C_r| < \e) \geq 2/3$. Then the expected query complexity of $\mA$ is $\Omega(n \big{/} (\e C_r)^{1/r})$. 
\end{cor}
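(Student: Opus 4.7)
The plan is to derive this corollary directly from Theorem~\ref{thm:subgraph-counting}, in perfect analogy with the derivation of Corollary~\ref{cor:edge-counting} from the same theorem. The only change is to instantiate the fixed subgraph $H$ as the $r$-clique $K_r$ rather than as a single edge.

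Concretely, I would apply Theorem~\ref{thm:subgraph-counting} with $k = r$ and $H = K_r$. I would take $G'$ to be any graph on $n/2$ vertices containing exactly $C_r$ copies of $K_r$ (for example, an appropriate clique plus isolated vertices, or the graph on $n/2$ vertices witnessing the desired clique count). Then I would set $\mu = 3\varepsilon C_r$; since $C_r \leq \binom{n/2}{r}$ whenever $G'$ lives on $n/2$ vertices and $\varepsilon < 1/3$, the hypothesis $\mu \leq \binom{n/2}{r}$ of Theorem~\ref{thm:subgraph-counting} is satisfied. Any $(1\pm\varepsilon)$-multiplicative approximation algorithm $\mA$ for the number of $r$-cliques with success probability at least $2/3$ can distinguish, with the same probability, graphs having exactly $h_{K_r}(G) = C_r$ cliques from those having at least $(1+3\varepsilon)C_r = C_r + \mu$ cliques. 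Hence by Theorem~\ref{thm:subgraph-counting}, the expected query complexity of $\mA$ is
\[
\Omega\!\left(\frac{n}{\mu^{1/r}}\right) = \Omega\!\left(\frac{n}{(\varepsilon C_r)^{1/r}}\right),
\]
which is precisely the claimed bound.

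I do not anticipate any real obstacle here: Theorem~\ref{thm:subgraph-counting} is tailored exactly to this kind of consequence, and the only substantive checks are (i) that the promise about $h_H(G)$ and $h_H(G) + \mu$ translates into a gap that $(1\pm\varepsilon)$-approximation can resolve, and (ii) that $\mu \leq \binom{n/2}{r}$ holds in the regime of interest. Both are straightforward, the latter being automatic whenever $\varepsilon$ is bounded away from $1$ and $G'$ sits inside $n/2$ vertices. The proof is therefore a two-line appeal, structurally identical to Corollary~\ref{cor:edge-counting}.
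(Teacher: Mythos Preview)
Your proposal is correct and matches the paper's intended approach: the paper states Corollary~\ref{cor:cliques} immediately after Corollary~\ref{cor:edge-counting} without a separate proof, leaving it as the obvious instantiation of Theorem~\ref{thm:subgraph-counting} with $H = K_r$, exactly as you describe.
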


\begin{rem}
  The lower bound of Corollary~\ref{cor:edge-counting} is tight, as a matching upper bound is given in~\cite{Goldreich2008}. In Appendix~\ref{sec:edge-counting}, we apply Theorem~\ref{thm:query-lb} to show that any $(2 - \e)$ approximation to $m$ requires $\Omega(n^2 / m)$ queries if only degree queries are allowed. This fact was observed by Feige~\cite{feige06}, who also showed that $O(n / \sqrt{m})$ degree queries are sufficient to obtain a $2$-approximation of $m$.

  The lower bound of Corollary~\ref{cor:cliques} is tight for some ranges of the parameters $n$, $m$, and $C_r$, but not for the entire range (see~\cite{Eden2015, CLIQUES}). In Section~\ref{sec:triangles}, we apply Theorem~\ref{thm:query-lb} (and Corollary~\ref{cor:cliques}) to obtain a tight lower bound for approximating the number of triangles in a graph ($C_3$) over the entire range of parameters, thereby proving the lower bound of~\cite{Eden2015}. The same methodology can be applied to prove the lower bound of~\cite{CLIQUES} for general $r$, which we present in Appendix~\ref{sec:cliques}.
\end{rem}

\subsection{Sampling Edges}
\label{sec:edge-sampling}
In this section, we prove a lower bound on the number of queries necessary to sample an edge in a graph $G = (V, E)$ from an ``almost-uniform'' distribution $D$ over $E$. The lower bound we obtain---originally proven in~\cite{Eden2017}---is tight, as a matching upper bound is proven in~\cite{Eden2017}. Here, we use ``almost uniform'' in the sense of total variational distance:

\begin{dfn}
  Let $D$ and $D'$ be probability distributions over a finite set $X$. Then the \dft{total variational distance} between $D$ and $D'$, denoted $\tvd(D, D')$, is defined by
  \[
  \tvd(D, D') = \frac 1 2 \sum_{x \in X} \abs{D(x) - D'(x)}. 
  \]
  For $\e > 0$, we say that $D$ is \dft{$\bm{\e}$-close to uniform} if $\tvd(D, U) \leq \e$ where $U$ is the uniform distribution on $X$ (i.e., $U(x) = 1 / \abs{X}$ for all $x \in X$).
\end{dfn}

\begin{thm}[cf.~\cite{Eden2017}]
  \label{thm:edge-sampling}
  Let $0 < \e < 1/3$. Suppose $\mA$ is an algorithm that for any graph $G = (V, E)$ on $n$ vertices and $m$ edges returns an edge $e \in E$ sampled from a distribution $D$ that is $\e$-close to uniform using neighbor, degree, and pair queries. Then $\mA$ requires $\Omega(n / \sqrt{m})$ queries.
\end{thm}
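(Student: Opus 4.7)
The plan is to apply Theorem~\ref{thm:query-lb} with $f = \disj$ under the unique-intersection promise of Theorem~\ref{thm:disj-lb}, taking the disjointness instance size to be $N = \Theta(n/\sqrt{m})$. For the embedding, set $\ell = \lceil \sqrt{m}\rceil$ and $N = \lfloor n/\ell\rfloor - 1$ and partition $V = [n]$ into blocks $V_0, V_1, \ldots, V_N$ of size $\ell$ each. The block $V_0$ is always a clique on $\ell$ vertices, contributing $\binom{\ell}{2} = \Theta(m)$ edges, and for each $i \in [N]$ the block $V_i$ is declared a clique if $x_i \wedge y_i = 1$ and is otherwise a set of $\ell$ isolated vertices; no edges cross between blocks. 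Define $g(G) = 1$ iff every edge of $G$ lies inside $V_0$. Then $(\calE, g)$ embeds $\disj$ under the unique-intersection promise: if $x, y$ are disjoint, only $V_0$ has edges and $g(G) = 1$; if $x, y$ intersect at a single coordinate $j$, then $V_j$ is a clique too, $g(G) = 0$, and the graph has $2\binom{\ell}{2} = \Theta(m)$ edges. Both cases yield graphs on $n$ vertices with $\Theta(m)$ edges, so the target bound $\Omega(n/\sqrt{m})$ matches.

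The heart of the reduction is converting the $\e$-close-to-uniform sampler $\mA$ into an algorithm $\mA'$ computing $g$. The distinguisher runs $\mA$ independently $k$ times, obtaining edges $e_1, \ldots, e_k$, and outputs $1$ iff every $e_t$ lies in $E(V_0)$. In the disjoint case, $E(G) \subseteq E(V_0)$, so $\mA'$ always returns the correct value $1$. In the intersecting case, $|E(V_j)| = |E(V_0)|$ and so $V_j$ carries exactly half of the uniform mass; by $\e$-closeness with $\e < 1/3$, each invocation of $\mA$ returns an edge in $E(V_j)$ with probability at least $1/2 - \e > 1/6$, and choosing a constant $k$ large enough that $(1/2 + \e)^k < 1/3$ makes the probability that all $k$ samples miss $V_j$ strictly below $1/3$. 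Hence $\mA'$ computes $g$ with success probability at least $2/3$.

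To finish, I verify that every degree, neighbor, or pair query on $\calE(x, y)$ satisfies $\cost_\calE(q) \leq 2$: queries whose answer depends only on $V_0$, or which involve two distinct blocks, are $(x, y)$-independent and cost $0$ bits, while any query concerning a single block $V_i$ with $i \geq 1$ is resolved by Alice and Bob exchanging the single bits $x_i$ and $y_i$. Applying Theorem~\ref{thm:query-lb} to $\mA'$, whose expected query complexity is at most $k T$ where $T$ is that of $\mA$, yields $kT = \Omega(R(\disj)/B) = \Omega(N)$, and hence $T = \Omega(n/\sqrt{m})$ since $k = O(1)$. The main subtlety is ensuring that the $\e$-close sampler is forced to hit the active block with constant probability in the NO case; this is precisely why the active $V_j$ is chosen to match the size of $V_0$, so that it carries a $1/2$-fraction of the uniform mass and the constant-close guarantee survives the constant-depth amplification.
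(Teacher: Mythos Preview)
Your proposal is correct and follows essentially the same approach as the paper: an embedding of $\disj$ (unique-intersection) via a fixed ``base'' subgraph plus potential hidden cliques, with the sampler amplified a constant number of times to detect the hidden clique carrying a $1/2$-fraction of the uniform mass. The only cosmetic differences are that the paper reuses the generic construction from Theorem~\ref{thm:subgraph-counting} with an arbitrary $G'$ on $n/2$ vertices as the base while you take a single clique $V_0$, and the paper fixes $k = 7$ explicitly.
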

\begin{proof}
  We use the same embedding $\calE$ of $\disj$ described in the proof of Theorem~\ref{thm:subgraph-counting} where $G'$ is any graph on $m'$ edges, and $\ell = \sqrt{m'}$ so that $N = n / 2 \sqrt{m}$. Thus, if any $K_j$ is a clique, the induced subgraph on $K = K_1 \cup \cdots \cup K_N$ contains at least $m / 2$ edges in $G$ (where $m$ is the number of edges in $G$). We then take $g : \calG_n \to \set{0, 1}$ to be the function whose value is $0$ if and only if $K$ contains an edge. It is clear that $(\calE, g)$ is an embedding of $\disj$. Thus, by Theorem~\ref{thm:query-lb} (and the proof of Theorem~\ref{thm:subgraph-counting}), any algorithm $\mA'$ that computes $g$ requires $\Omega(N) = \Omega(n / \sqrt{m})$ queries. 

  Let $\mA$ be an algorithm as in the statement of the theorem. We will show that by invoking $\mA$ $O(1)$ times, we can compute $g$. Thus, the lower bound on the number of queries for $\mA$ follows from the lower bound on any algorithm $\mA'$ computing $g$, as above. $\mA'$ works as follows: repeat $\mA$ $7$ times to get edge samples $e_1, \ldots, e_7$. If at least one $e_i$ satisfies $e_i \in K \times K$, return $0$, otherwise return $1$. We claim that this procedure computes $g$ (on the range of $\calE$). To see this, suppose $g(G) = 0$, i.e., at least one of the $K_i$ is a clique so that $K \times K$ contains at least $m / 2$ edges. Thus the fraction of edges in $K \times K$ is at least $1/2$, so each invocation of $\mA$ must return an edge $e \in K \times K$ with probability at least $1/2 - 1/3 = 1/6$. Therefore, if $g(G) = 0$, the probability that algorithm $\mA'$ returns $1$  (i.e., that no edge $e \in K \times K$ is sampled) is at most $(1 - 1/6)^7 < 1/3$. On the other hand, if $g(G) = 1$, then the procedure will always return $1$, as $G$ contains no edges in $K \times K$.
\end{proof}

%% Say something about random edge sampling as a basic query.

%% \begin{rem}
%%   Aliakbarpour et.\ al~\cite{Aliak} describe an algorithm for estimating the number of star graphs in a graph using random edge sampling as a basic query. Theorem~\ref{thm:s}
%% \end{rem}

\subsection{Counting Triangles}
\label{sec:triangles}
In this section, we prove lower lower bounds for approximately counting the number of triangles, $C_3$ in a graph. When combined with Corollary~\ref{cor:cliques} (with $r = 3$), the main result in this section gives tight lower bounds for all ranges of the parameters $n$, $m$, and $C_3$. The lower bounds (and matching upper bounds) were originally described in~\cite{Eden2015}.

\begin{thm}[cf.~\cite{Eden2015}]
  \label{thm:triangles}
  Let $G$ be a graph with $n$ vertices, $m$ edges, and $C_3$ triangles. Then any algorithm $\mA$ that computes a multiplicative approximation for $C_3$ must perform $\Omega\left(\min\left\{m, \frac{{m}^{3/2}}{C_3}\right\}\right)$ degree, neighbor, or pair queries. This lower bound holds even if $\mA$ is allowed to perform random edge queries (i.e., $\mA$ may sample a random edge in $G$ from a uniform distribution).
\end{thm}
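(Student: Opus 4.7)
The plan is to apply Theorem~\ref{thm:query-lb} via a reduction from the $k$-intersection problem (Corollary~\ref{cor:k-intersection}), using a ``common-neighbor'' embedding based on a tripartite graph $V = A \cup B \cup S$ with $|A| = |B| = \Theta(\sqrt m)$ and $|S| = \Theta(\min\{\sqrt m, C_3\})$. The bulk is the union of the complete bipartite graphs between $A$ and $S$ and between $B$ and $S$, contributing $(|A|+|B|) \cdot |S| = \Theta(m)$ triangle-free edges. Each coordinate $j \in [N]$ of the communication input corresponds to a distinct pair $(a_j, b_j) \in A \times B$, so $N = |A|\cdot|B| = \Theta(m)$, and the ``slot edge'' $(a_j, b_j)$ is present in $\calE(x, y)$ iff $x_j = y_j = 1$. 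Each activated slot creates $|S|$ triangles of the form $(a_j, b_j, s)$ with $s \in S$, via the fixed $S$-neighbors of $a_j$ and $b_j$. Taking $k = \lceil C_3/|S| \rceil$, the $k$-intersection promise yields either $0$ triangles (when $\inter_k(x, y) = 0$) or $\Theta(k \cdot |S|) = \Theta(C_3)$ triangles (when $\inter_k(x, y) = 1$), which any $(1\pm\e)$-approximation of $C_3$ with small $\e$ must distinguish. Let $g(G) = 1$ iff $G$ is triangle-free.

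Next, I would verify that each pair query and each random edge query can be simulated with $O(1)$ bits of expected communication. Pair queries on bulk edges require no communication; pair queries on potential slot edges $(a_j, b_j)$ cost $2$ bits to exchange $x_j, y_j$. Random edge queries are simulated by rejection sampling from a universe $U$ of size $m + N = \Theta(m)$ containing all bulk edges and all potential slot edges: using shared randomness, Alice and Bob propose a candidate, return it immediately if it is a bulk edge, or verify it with $2$ bits and return or restart if it is a potential slot edge. Since $|E(\calE(x,y))|/|U| = \Omega(1)$, the expected number of trials is $O(1)$, and the expected communication per random edge query is $O(1)$. Applying Theorem~\ref{thm:query-lb} with $B = O(1)$ and Corollary~\ref{cor:k-intersection} then gives $T = \Omega(R(\inter_k)/B) = \Omega(N/k) = \Omega(m \cdot |S|/C_3) = \Omega(\min\{m, m^{3/2}/C_3\})$, as desired.

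The main obstacle is handling degree and neighbor queries within the $O(1)$ per-query simulation budget: a vertex $a \in A$ has bulk degree $|S|$ plus $k_a$ additional neighbors from activated slots, and the naive protocol for $\degree(a) = |S| + k_a$ would require Alice and Bob to solve a set-intersection problem on sets of size $|B| - 1 = \Omega(\sqrt m)$. The resolution exploits the $k$-intersection promise, which caps the total slot activity at $k$ and hence ensures $\sum_a k_a + \sum_b k_b \le 2k$; by ordering each vertex's neighbors so that its $|S|$ fixed $S$-neighbors precede any slot neighbors, and by charging the (occasional) $\Omega(\sqrt m)$-bit cost of certifying $k_a > 0$ against the $O(k)$ globally active slots in the execution, a careful amortization keeps the total simulation communication within a constant factor of $T$, so that the lower bound $T = \Omega(\min\{m, m^{3/2}/C_3\})$ is recovered even when the algorithm interleaves all four query types (degree, neighbor, pair, and random edge).
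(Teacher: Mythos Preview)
Your reduction and the handling of pair and random-edge queries are fine, but the amortization you sketch for degree and neighbor queries does not go through. The expensive step is not ``certifying $k_a>0$'' but certifying $k_a=0$: for a fixed $a\in A$, deciding whether any of the $|B|=\Theta(\sqrt m)$ slots incident to $a$ is active is precisely a disjointness instance on $\Theta(\sqrt m)$ coordinates, and by Theorem~\ref{thm:disj-lb} this costs $\Omega(\sqrt m)$ bits even with randomization and even under the global $k$-intersection promise (which places no useful restriction on the coordinates incident to a single vertex). An adversarial algorithm that simply issues $T$ degree queries to $T$ distinct vertices of $A$ therefore forces $\Omega(T\sqrt m)$ bits of simulation, so Theorem~\ref{thm:query-lb} yields only $T=\Omega\!\big((N/k)/\sqrt m\big)=\Omega(m/C_3)$ rather than $\Omega(m^{3/2}/C_3)$. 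There is nothing to charge against: in the disjoint instance \emph{every} $k_a$ equals $0$, yet each one still costs $\Omega(\sqrt m)$ bits to establish.

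The paper avoids this entirely by making the degree sequence of $\calE(x,y)$ independent of $(x,y)$. It introduces two ``mirror'' sets $A',B'$ of size $\ell$ and, for each coordinate $(i,j)$, places the pair of edges $\{(a_i,b_j),(a_j',b_i')\}$ when $x_{ij}=y_{ij}=1$ and the complementary pair $\{(a_i,a_j'),(b_j,b_i')\}$ otherwise. Either way $a_i$ acquires exactly one neighbor from coordinate $(i,j)$, so $d(a_i)=2\ell$ regardless of the input; likewise for $A',B,B'$. Degree queries now cost zero bits, the $j$\textsuperscript{th} neighbor of $a_i$ is determined by the single pair $(x_{ij},y_{ij})$, and a uniform random edge is simulated by sampling a vertex proportionally to its (known) degree followed by one neighbor query. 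Triangle-freeness in the disjoint case holds because the graph is then bipartite between $S\cup A\cup B$ and $A'\cup B'$. This degree-regularization is the missing idea; with it, every query genuinely costs $O(1)$ bits and no amortization is needed.
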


We prove Theorem~\ref{thm:triangles} by applying Theorem~\ref{thm:query-lb} with $f = \inter_k$, where $N$ is the size of the instance of $\inter_k$. For any choice of the parameters $n$, $m$, and $C_3$, we construct an embedding $(\calE, g)$ of $\inter_k$ such that if $\inter_k(x, y) = 1$, then  $\calE(x, y)$ has (roughly) $C_3$ triangles; if $\inter_k(x, y) = 0$, then $\calE(x, y)$ is triangle-free.

\begin{proof}
  Let $n, m$, and $C_3$ be given. In order to simplify our presentation, we assume that $C_3 \geq \frac 1 2 \sqrt{m}$.\footnote{At the end of the proof we will discuss how to avoid this unnecessary assumption.} Let $\ell$ be a parameter (to be chosen later), and $N = \ell^2$ the size of an instance of $\inter_k$. We identify the set $\set{0, 1}^N$ with $\set{0,1}^{\ell \times \ell}$, so that elements $x \in \set{0, 1}^N$ are indexed by two parameters $x = (x_{ij})$ with $1 \leq i, j \leq \ell$. 

  For $x, y \in \set{0, 1}^N$, we define $G = (V, E) = \calE(x, y)$ as follows. We partition the vertex set $V$ into $5$ sets $A, A', B, B', S$ each of size $\ell$, along with an auxiliary set $C$ of size $n - 5\ell$. The set $C$ plays no role in our construction except to control the number of vertices in $G$. We denote $A = \set{a_1, a_2, \ldots, a_\ell}$, $A' = \set{a_1', a_2', \ldots, a_\ell'}$, and similarly for the remaining sets in the partition. The edge set $E$ is constructed as follows:
  \begin{itemize}
  \item For all $a \in A$, $b \in B$ and $s \in S$, we have $(a, s), (b, s) \in E$.
  \item For all $i, j \in [\ell]$ we have
    \[
    \begin{cases}
      (a_i, b_j), (a_j', b_i') \in E &\text{if } x_{i j} = y_{i j} = 1\\
      (a_i, a_j'), (b_j, b_i') \in E &\text{otherwise}.
    \end{cases}
    \]
  \end{itemize}
  See Figure~\ref{fig:triangles} for an illustration of the construction of $\calE(x, y)$.

  \begin{figure}[tb!]
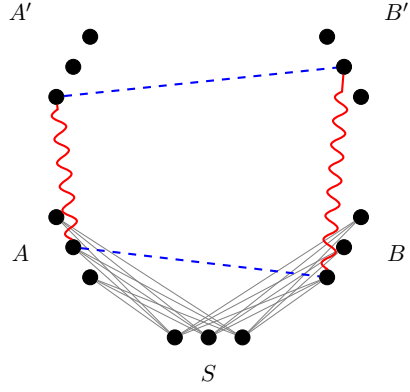

    \triangles
    \caption{An illustration of the two graph constructions of $\calE(x,y)$ for  $C_3 \geq \frac 1 2 \sqrt m$. If $x_{ij} = y_{ij} = 1$, then the dashed blue edges $(a_i,b_j), (a'_j, b'_i)$ are in $E$, and otherwise the red curly edges $(a_i,a'_j), (b_j,b'_i)$ are in $E$. Observe that each edge of the form $(a_i, b_j)$ is contained in $\abs{S}$ triangles.}
    \label{fig:triangles}
  \end{figure}

  Define the partial function $g : \calG_n \to \set{0, 1}$ by
  \[
  g(G) =
  \begin{cases}
    0 &\text{if } C_3(G) = 0\\
    1 &\text{if } C_3(G) \geq k \ell.
  \end{cases}
  \]
  With this definition of $g$, we claim that $(\calE, g)$ is an embedding of $\inter_k$. To see the claim is true, first consider the case where $\inter_k(x, y) = 0$---i.e., $x$ and $y$ are disjoint. In this case, all edges in $\calE(x, y)$ are between $A$ and $S$, $A$ and $A'$, $B$ and $S$, or $B$ and $B'$. Therefore there are no edges between $S \cup A \cup B$ and $A' \cup B' \cup C$, implying that $G$ is bipartite, hence triangle-free. On the other hand, if $\inter_k(x, y) = 1$, then for each of the (at least $k$) pairs $(i, j) \in [\ell]^2$ satisfying $x_{ij} = y_{ij} = 1$, we have $(a_i, b_j) \in E$. Therefore, for each $s \in S$, the edges $(a_i, b_j), (b_j, s), (s, a_i) \in E$ form a triangle. Since $\abs{S} = \ell$, this implies that $\calE(x, y)$ contains at least $k \ell$ triangles.

  To apply Theorem~\ref{thm:query-lb}, we must show that each allowable query can be simulated by Alice and Bob holding $x$ and $y$, respectively. For any $u, v \in V$, the queries can be simulated as follows.
  \begin{description}
  \item[degree query] $d(u)$ is independent of $x$ and $y$: if $u \in S \cup A \cup B$, then $d(u) = 2 \ell$; if $u \in A' \cup B'$, then $d(u) = \ell$; if $u \in C$, then $d(u) = 0$. Thus Alice and Bob can simulate any degree query without communication.
  \item[neighbor query] For $a_i \in A$, we label $a_i$'s incident edges so that $a_i$'s $j\th$ neighbor is either $b_j$ (if $x_{ij} = y_{ij} = 1$) or $a_j'$ otherwise. Edges incident to $A'$, $B$, and $B'$ are labeled similarly. Thus, Alice and Bob can answer queries of the form $\nbr_i(u)$ for $v \in A \cup A' \cup B \cup B'$ with $j \leq \ell$ by exchanging $x_{ij}$ and $y_{ij}$ using $2$ bits of communication. All other neighbor queries can be answered without communication.
  \item[pair query] Alice and Bob can answer pair queries of the form $\pair(a_i, a_j')$, $\pair(a_i, b_j)$, and $\pair(b_j, b_i')$ by exchanging $x_{ij}$ and $y_{ij}$. All other queries can be answered without communication. Again the communication cost is $2$ bits.
  \item[uniform edge sample] Alice and Bob can sample a random edge from a uniform distribution using their shared public randomness and the fact that each node in $\calE(x, y)$ has the same degree independent of $x$ and $y$. To this end, Alice and Bob sample $e = (u, v)$ by first sampling a vertex $v \in V$ where each node is chosen with probability proportional to its degree, $d(v)$. Alice and Bob then choose a random number $i \in [d(v)]$ uniformly at random, and sample the edge $e = (v, u)$ where $u = \nbr_i(v)$ at a communication cost of (at most) $2$ bits. Note that $e = (v, u)$ is sampled with probability
    \[
    \Pr(e = (v, u) \text{ is sampled}) = \frac{d(v)}{\sum_{w \in V} d(w)} \cdot \frac{1}{d(v)} + \frac{d(u)}{\sum_{w \in V} d(w)} \frac{1}{d(u)} = \frac{1}{m},
    \]
    so that edges are indeed sampled according to a uniform distribution.
  \end{description}

  Since all queries can be simulated using at most $2$ bits of communication between Alice and Bob and $N = \ell^2$, Theorem~\ref{thm:query-lb} (together with communication lower bound for $\inter_k$, Corollary~\ref{cor:k-intersection}) implies that computing $g$ requires $\Omega(\ell^2 / k)$ queries. For a $k$-intersecting instance (i.e., $\inter_k(x, y) = 1$), we have $m = 4 \ell^2$ and $C_3(G) \geq k \ell$. Thus, setting $\ell = \frac{1}{2} \sqrt{m}$ and $k = C_3 / \ell$, we obtain the desired result when $2 C_3 / \sqrt{m} \geq 1$. In the case where $C_3 < \frac 1 2 \sqrt m$, we may take $k = 1$ and modify the construction so that $\abs{S} = C_3$.
\end{proof}

\begin{rem}
  \label{rem:triangle-random-edge}
  Combined with Corollary~\ref{cor:cliques} with $r = 3$, Theorem~\ref{thm:triangles} gives a lower bound of $\Omega\left(\min\left\{m, m^{3/2} / C_3\right\} + n / C_3^{1/3}\right)$ queries for any algorithm that obtains a multiplicative approximation to $C_3$ in the general graph model with no random edge samples. This lower bound is tight by the matching upper bound in~\cite{Eden2015}. When random edge samples are allowed, however, the lower bound construction of Theorem~\ref{thm:subgraph-counting} (hence Corollary~\ref{cor:cliques}) does not yield the same lower bound. To see this, note that a random edge sample will be in $K_j$ with probability $\e$, so only $O(1/\e)$ such samples are sufficient distinguish $\disj(x, y) = 1$ from $0$-instances with constant probability. When random edge samples are allowed, we conjecture that the lower bound of Theorem~\ref{thm:triangles} is tight for the entire range of $n$, $m$, and $C_3$.
\end{rem}

\subsection{Computing Edge Connectivity}
\label{sec:connectivity}
In this section, we consider the problem of estimating the edge connectivity of a graph. Recall that a graph $G = (V, E)$ is \emph{$k$-(edge)-connected} if at least $k$ edges must be removed from $G$ in order to disconnect it. Equivalently, $G$ is $k$-connected if for every $u, v \in V$, there are at least $k$ edge-disjoint paths between $u$ and $v$. We prove the following lower bound for determining the connectivity of a graph $G$.

\begin{thm}
  \label{thm:connectivity}
  For $k \geq 1$, let $G$ be a graph with $n$ vertices and $m \geq 2 k n$ edges. Then any algorithm $\mA$ that distinguishes between the case where $G$ is $k$-connected and $G$ is disconnected requires $\Omega(m / k)$ degree, neighbor, or pair queries. This lower bound holds even if $\mA$ is allowed to perform random edge queries.
\end{thm}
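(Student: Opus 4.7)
The plan is to apply Theorem~\ref{thm:query-lb} with $f = \disj$ under the unique-intersection promise on $N = \lfloor m/(2k) \rfloor$ bits, so that $R(\disj) = \Omega(N) = \Omega(m/k)$ by Theorem~\ref{thm:disj-lb}. The whole proof then reduces to designing an embedding in which every allowable query---degree, neighbor, pair, or random edge sample---can be simulated with only $O(1)$ bits of communication. Since $\disj$ already provides the right lower bound ``budget,'' the nontrivial content is entirely in the embedding.

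For the embedding, I would partition $V$ into halves $V_1, V_2$ of size $n/2$ and place on each $V_s$ a $(2k)$-edge-connected graph $H_s$ with $m/2$ edges whose edge set decomposes into $N$ matchings $M_1^s, \ldots, M_N^s$ of size $k$. For each $i \in [N]$, write $M_i^1 = \{(u_{i,j}, u_{i,j}') : j \in [k]\}$ and $M_i^2 = \{(v_{i,j}, v_{i,j}') : j \in [k]\}$, and associate with bundle $i$ the set of $2k$ ``crossing'' edges $\{(u_{i,j}, v_{i,j}), (u_{i,j}', v_{i,j}') : j \in [k]\}$ between $V_1$ and $V_2$. Define $\calE(x, y)$ by starting from $H_1 \cup H_2$ and, for each $i$ with $x_i = y_i = 1$, \emph{swapping} bundle $i$: delete $M_i^1 \cup M_i^2$ and insert the $2k$ crossings of bundle $i$. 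Let $g(G) = 1$ if $G$ is $k$-edge-connected and $g(G) = 0$ if $G$ is disconnected.

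To verify that $(\calE, g)$ embeds $\neg\disj$: when $\disj(x, y) = 1$, no swap occurs and $G = H_1 \cup H_2$ has no edge between $V_1$ and $V_2$, so $G$ is disconnected. When $\disj(x, y) = 0$ under the unique-intersection promise, exactly one bundle $i^*$ is swapped; since $H_s$ is $(2k)$-edge-connected, removing the $k$ edges of $M_{i^*}^s$ leaves any internal cut of $H_s$ of size at least $2k - k = k$, and any cut separating $V_1$ from $V_2$ contains at least the $2k$ newly added crossings, so a routine case analysis on $(S, V\setminus S)$ shows that $G$ is $k$-edge-connected. The key structural property of the swap is that it preserves every vertex degree---each affected vertex loses exactly one internal edge and gains exactly one crossing---so the degree sequence of $G$ is independent of $(x, y)$. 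Consequently degree queries cost $0$ bits; a pair query or neighbor query involving a vertex $u$ depends on $(x, y)$ only through whether a single specific bundle at $u$ is on, so after fixing an ordering of neighbors (resp.\ an indexing of variable pairs) by the bundle they belong to, Alice and Bob answer such a query by exchanging the two bits $(x_i, y_i)$. A uniformly random edge is simulated by rejection sampling from the fixed superset $H_1 \cup H_2 \cup \{\text{all } 2Nk = m \text{ potential crossings}\}$ of size $2m$; each acceptance test costs at most $2$ bits and the acceptance probability is $|E(G)|/2m = 1/2$, giving expected cost $O(1)$ per sample. Plugging $B = O(1)$ and $R(\disj) = \Omega(N)$ into Theorem~\ref{thm:query-lb} yields the desired bound.

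The main technical obstacle is constructing $H_s$ with its three simultaneous properties: $m/2$ edges on $n/2$ vertices, $(2k)$-edge-connectivity, and an edge partition into $N = m/(2k)$ matchings of size $k$. The assumption $m \geq 2kn$ yields average degree at least $4k$, so one may take $H_s$ to be a $\lceil 2m/n \rceil$-regular bipartite graph between two halves of $V_s$ with large spectral gap (e.g., a Ramanujan bipartite graph or a random regular bipartite graph), for which the edge connectivity is at least $2k$; K\"onig's edge-coloring theorem then decomposes its edges into $\lceil 2m/n \rceil$ perfect matchings of size $n/4$, each of which is further split into matchings of size $k$ to form the $N$ bundles. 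With this $H_s$ in hand the rest of the proof is a mechanical instantiation of the framework.
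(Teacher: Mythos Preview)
Your approach is correct and takes a genuinely different route from the paper's. The paper reduces from $\inter_k$ (with the $\{0,k\}$-promise) on $N=\ell^2$ bits, using four size-$\ell$ blocks $A,A',B,B'$ in which each coordinate $(i,j)$ contributes either the ``parallel'' edges $(a_i,a_j'),(b_i,b_j')$ or the ``cross'' edges $(a_i,b_j'),(b_i,a_j')$; disjointness makes $(A\cup A'\cup C)$ and $(B\cup B')$ separate components, and in the $k$-intersecting case the paper verifies $k$-connectivity via a five-case analysis exhibiting $k$ edge-disjoint paths between every pair of vertices, then sets $\ell=\Theta(\sqrt m)$ and invokes Corollary~\ref{cor:k-intersection} to get $\Omega(\ell^2/k)=\Omega(m/k)$. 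You instead reduce from plain $\disj$ (unique intersection) on $N=m/(2k)$ bits---the ``$/k$'' is absorbed into the instance size rather than coming from the $\inter_k$ bound---and replace the explicit block construction with two $(2k)$-edge-connected base graphs plus a matching-swap. This buys you a one-line connectivity argument (removing a $k$-matching from a $2k$-connected graph leaves it $k$-connected, and the $V_1$--$V_2$ cut has $2k$ crossings), at the price of importing existence of a suitably structured $H_s$ (expanders and K\"onig); the paper's construction is fully elementary and self-contained but pays with the longer case analysis. Two small points to clean up: (i) for pair queries on $(u,v)\in V_1\times V_2$ to cost $O(1)$ bits you need each potential crossing pair to lie in a \emph{single} bundle, which is not automatic from your description but follows if you take $H_2=\phi(H_1)$ and, for $(u,u')\in M_i^1$, cross as $(u,\phi(u'))$ and $(u',\phi(u))$, so that the pair $(u,w)$ determines the unique bundle containing the $H_1$-edge $(u,\phi^{-1}(w))$; (ii) for random edge samples, since your degrees are fixed you can use the paper's degree-proportional vertex sampling followed by one neighbor query (worst-case $2$ bits), which fits Theorem~\ref{thm:query-lb} directly, rather than rejection sampling with only expected $O(1)$ cost.
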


\begin{proof}
  The proof uses a similar construction and analysis to that of Theorem~\ref{thm:triangles}. Again, we describe an embedding $(\calE, g)$ of $\inter_k$, for which we provide full details. The correctness of the embedding and simulation arguments are omitted, as they are nearly identical to those in the proof of Theorem~\ref{thm:triangles}.

  Let $\ell \geq 2k$ be parameter to be chosen later and $N = \ell^2$. Again we identify $\set{0, 1}^N = \set{(x_{ij}) \sucht 1 \leq i, j \leq \ell}$. For $x, y \in \set{0, 1}^N$, the graph $\calE(x, y) = (V, E)$ is constructed as follows. We partition $V$ into $5$ sets, $V = A \cup A' \cup B \cup B' \cup C$ where $\abs{A} = \abs{A'} = \abs{B} = \abs{B'} = \ell$, and $\abs{C} = n - 4 \ell$. Each $v \in C$ is connected to $k$ distinct vertices in $A$ arbitrarily so that $d(v) = k$. We then construct edges between $A$, $A'$, $B$, and $B'$ according to the following rule:
  \begin{equation}
    \label{eqn:connectivity}
    \begin{cases}
      (a_i, b_j'), (b_i, a_j') \in E &\text{if } x_{ij} = y_{ij} = 1\\
      (a_i, a_j'), (b_i, b_j') \in E &\text{otherwise.}
    \end{cases}
  \end{equation}
  We define the partial function $g : \calG_n \to \set{0, 1}$ by $g(G) = 1$ if $G$ is $k$-connected, and $0$ if $G$ is disconnected.

  We claim that $(\calE, g)$ is an embedding of $\inter_k$ (where we assume the promise that $\sum_{i,j} x_{ij} y_{ij} \in \set{0, k}$). In the case where $\inter_k(x, y) = 0$, there are no edges between $A \cup A' \cup C$ and $B \cup B'$, hence $\calE(x, y)$ is disconnected. If $\inter_k(x, y) = 1$, we must show that $\calE(x, y)$ is $k$-connected. We will show that there are at least $k$ edge disjoint paths between any pair of vertices in $\calE(x, y)$. We consider the following cases separately. %\wnote{There is probably a simpler argument using Menger's theorem: it suffices to show that any partition $V = U \cup W$ contains at least $k$ crossing edges...}
  \begin{description}
  \item[Case 1:] $u, v \in A$ (or symmetrically, $u, v \in A', B$, or $B'$). From the definition of $\calE$ (Equation~(\ref{eqn:connectivity})) and the promise for $\inter_k$, there are at most $k$ pairs $(a_i, a_j') \in A \times A'$ that are \emph{not} contained in $E$. Since $\ell \geq 2 k$, this implies that $u, v \in A$ have at least $\ell - k \geq k$ common neighbors in $A'$. In particular, there are at least this many edge disjoint paths (of length $2$) between $u$ and $v$.

  \item[Case 2:] $u \in A$, $v \in A'$ (or symmetrically, $u \in B$, $v \in B'$). As before, $v$ has at least $k$ distinct neighbors, $u_1, \ldots, u_k \in A$. Further, by the analysis in Case~1, each $u_i$ has at least $k$ common neighbors with $u$. Therefore there exists a matching $(u_1, v_1), \ldots, (u_k, v_k) \in E$.\footnote{Such a matching can be found by greedily choosing common neighbors of $u$ and $u_1$, $u_2$, etc.} The paths $(u, v_i), (v_i, u_i), (u_i, v)$ for $i = 1, \ldots, k$ are then edge disjoint. (Note that it may be the case that $u_i = u$, in which case we take $(u, v)$ to be one of the matching edges and take this edge to be the corresponding path between $u$ and $v$.)

  \item[Case 3:] $u \in A$, $v \in B'$ (or symmetrically $u \in A'$, $v \in B$). Let $(u_1, v_1), \ldots, (u_k, v_k) \in A' \times B$ be the $k$ edges between $A'$ and $B$. Take $U = \set{u_1, \ldots, u_k}$ (respectively, $V = \set{v_1, \ldots, v_k}$) to be the multiset of endpoints of the edges between $A'$ and $B$ in $A'$ (respectively $B$). It suffices to show that are edge-disjoint paths from $u$ to each $u_i$ (with multiplicity)---the analogous result for $v$ and $v_i$ is identical. Let $u_1', u_2', \ldots, u_k'$ be distinct neighbors of $u$. Then, as in Case~1, each $u_i$ and $u_i'$ have at least $k$ common neighbors. By choosing one such common neighbor, $u_i''$ for each $i$ (greedily) we can form $k$ edge disjoint paths $(u, u_i'), (u_i', u_i''), (u_i'', u_i)$.

  \item[Case 4:] $u \in A$, $v \in B$ (or symmetrically, $u \in A'$, $v \in B'$). Let $(u_1, v_1), \ldots, (u_k, v_k)$ be as in Case~3. As in Case~3, there are $k$ edge-disjoint paths in $A \cup A'$ from $u$ to the $u_i$. Further, there are $k$ edge-disjoint paths from $v$ to the $v_j$ as in Case~2.

  \item[Case 5:] $u \in C$. Let $u_1, \ldots, u_k$ be the neighbors of $u$ in $A$. It suffices to show that there are edge disjoint paths from each $u_i$ to $v$. The cases $v \in A, A', B, B'$ are analogous to arguments in Cases~1--4. If $v \in C$, let $v_1, \ldots, v_k$ be the neighbors of $v$ in $A$. Since each pair $u_i, v_i$ share $k$ common neighbors in $A'$, we can assign a unique neighbor $w_i$ to each such pair so that $(u, u_i), (u_i, w_i), (w_i, v_i), (v_i, v)$ for $i = 1, \ldots, k$ are edge disjoint paths.
  \end{description}

  As in the proof of Theorem~\ref{thm:triangles}, every degree, neighbor, pair, or random edge query can be simulated by Alice and Bob using at most $2$ bits of communication. Therefore, by Theorem~\ref{thm:query-lb} and Corollary~\ref{cor:k-intersection}, $\mA$ requires $\Omega(N / k) = \Omega(\ell^2 / k)$ queries. The construction above satisfies $m = 2 \ell^2 + k(n - 4 \ell)$ so taking $\ell = k + \sqrt{k^2 + (m - k n)/2} = \Theta(\sqrt m)$ gives the desired result.
\end{proof}

\section{Discussion}
\label{sec:discussion}
In this paper, we presented a new technique for proving query lower bounds for graph parameter estimation problems. Here, we conclude with some open questions and suggestions for further work.

\paragraph{The power of random edge queries} In~\cite{Aliak}, Aliakbarpour et al.\ consider a graph query model that allows uniform random edge samples as one of its basic queries. This model is strictly stronger than the ``general graph'' query model that allows only degree, neighbor, and pair queries:~\cite{Aliak} provides an upper bound for counting the number of star sugraphs in a graph that beats the lower bound described in~\cite{GRS11} for the general graph model. In Theorems~\ref{thm:triangles} and~\ref{thm:connectivity}, our lower bounds apply to both the general graph model, as well as the stronger model with uniform random edge samples. In the constructions described in the proofs of Theorems~\ref{thm:triangles} and~\ref{thm:connectivity}, edge samples do not afford more computational power, essentially because the degree sequence of the constructions is fixed. In particular, each random edge sample can be simulated by sampling a vertex with probability proportional to its (known) degree, and using a neighbor query to sample a random incident edge. Indeed, sampling edges from a uniform distribution is equivalent to sampling vertices with probability proportional to their degrees.

The construction used for lower bound of Theorem~\ref{thm:subgraph-counting} cannot give a lower bound better than $\Omega(\min\set{m / \mu^{2 / k}, n / \mu^{1/k}})$, as a clique $K_j$ in that construction contains $\mu^{2 / k}$ edges. In particular, the lower bound for estimating $m$ (Corollary~\ref{cor:edge-counting}) becomes only $\Omega(1 / \e)$ if random edge samples are used. An upper bound of $O(n^{1/3})$ for estimating $m$ with random edge (and vertex) samples is implied by the algorithm of Motwani et al.~\cite{Motwani2007}. The authors also prove a lower bound of $\Omega(n^{1/3})$, although the construction only holds for $m = O(n^{2/3})$. We believe it is an interesting problem to characterize the complexity of estimating $m$ in the general graph model with random edge samples the over the full range of $m$.

In general, we would like to better understand the power uniform random edge samples in the general graph model. We conjecture that the lower bound of Theorems~\ref{thm:triangles} (and more generally, Theorem~\ref{thm:cliques_second}) is tight over the entire range of the parameters. The algorithm of Eden et al.~\cite{Eden2015} proves that the lower bound is tight for $\min\set{m, m^{3/2}/C_3} = \Omega(n / C_3^{1/3})$ even without edge samples. Thus, edge samples may only help in the regime where $n / C_3^{1/3} = \omega(\min\set{m, m^{3/2}/C_3})$.

\begin{que}
  For what graph parameter estimation problems do random edge samples help? 
\end{que}

%% random edge samples appear to help prec

\paragraph{Property testing lower bounds} The graph query access models we consider were initially proposed in the context of property testing~\cite{Goldreich1998, GR02, Parnas2002, KKR04}. In graph property testing with the general graph model~\cite{KKR04}, the goal is to distinguish graphs that satisfy some property $P$ from those that are far from satisfying $P$ in the sense that an $\e$-fraction of edges of the graph must be modified in order to the graph to satisfy $P$. In this model, our constructions imply property testing lower bounds, at least for some range of $m$. For example, we state a consequence of Theorem~\ref{thm:subgraph-counting} for testing the property of triangle-freeness (i.e., that $C_3(G) = 0$).

\begin{cor}
  \label{cor:triangle-pt}
  Any property testing algorithm for triangle-freeness in the general graph model requires $\Omega(n / \sqrt{\e m})$ queries.
\end{cor}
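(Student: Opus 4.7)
The plan is to invoke Theorem~\ref{thm:subgraph-counting} with $H = K_3$ (so $k=3$), using the same embedding $\calE$ of $\disj$ that appears in the proof of that theorem. Choose $G'$ to be any triangle-free graph on $n/2$ vertices with $\Theta(m)$ edges, and pick the block size $\ell = c\sqrt{\e m}$ for a sufficiently large absolute constant $c$, so that $\mu = \binom{\ell}{3} = \Theta((\e m)^{3/2})$. Recall that the construction partitions the first $n/2$ vertices into $N = n/(2\ell)$ blocks of size $\ell$, where each block is either an independent set (if $x_j \wedge y_j = 0$) or a clique (if $x_j = y_j = 1$), while the remaining $n/2$ vertices host a copy of $G'$.

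First I would verify that this construction produces the two cases required for triangle-freeness testing. If $\disj(x,y) = 1$, no block becomes a clique, and $\calE(x,y)$ is the disjoint union of $G'$ with $n/2$ isolated vertices; this is triangle-free. Under the unique-intersection promise, if $\disj(x,y) = 0$ then exactly one block becomes a clique $K_\ell$, and since $G'$ is triangle-free, all triangles of $\calE(x,y)$ lie inside this single clique.

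The key step is to show that in the second case the graph is $\e$-far from triangle-free with respect to its number of edges. By Mantel's theorem, the maximum number of edges in a triangle-free graph on $\ell$ vertices is $\lfloor \ell^2/4 \rfloor$, so eliminating every triangle inside the clique requires deleting at least $\binom{\ell}{2} - \lfloor \ell^2/4 \rfloor = \Theta(\ell^2)$ edges. Edge additions can never destroy triangles, so any modification of $\calE(x,y)$ to a triangle-free graph must include at least this many deletions; choosing $c$ large enough ensures $\Theta(\ell^2) > \e m$, i.e., the graph is $\e$-far from triangle-free. Any property testing algorithm for triangle-freeness must therefore distinguish the two cases with probability at least $2/3$, and hence in particular distinguish graphs with $h_{K_3}(\calE(x,y)) = 0$ from graphs with $h_{K_3}(\calE(x,y)) \geq \mu$. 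Theorem~\ref{thm:subgraph-counting} then yields the desired lower bound of $\Omega(n/\mu^{1/3}) = \Omega(n/\sqrt{\e m})$ queries. The only nontrivial step is the Mantel-based farness calculation; everything else is a direct specialization of the subgraph counting theorem.
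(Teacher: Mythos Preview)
Your proposal is correct and essentially identical to the paper's own argument: the paper also applies the construction of Theorem~\ref{thm:subgraph-counting} with a triangle-free $G'$ on $n/2$ vertices and $m$ edges, block size $\ell = \Theta(\sqrt{\e m})$, and invokes Tur\'an's theorem (of which your Mantel argument is the $r=3$ case) to conclude that roughly $\e m / 3$ edges must be removed from the clique $K_j$ to destroy all triangles. The only cosmetic difference is that you package the conclusion by citing Theorem~\ref{thm:subgraph-counting} as a black box, whereas the paper phrases it more informally.
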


Corollary~\ref{cor:triangle-pt} follows from the construction in Theorem~\ref{thm:subgraph-counting} by taking $G'$ to be any triangle-free graph on $n/2$ vertices and $m$ edges, and $\ell = \sqrt{\e m}$. This way each (potential) clique $K_j$ on $\ell$ vertices will contain roughly $\e m$ edges. Further, by Tur\'{a}n's theorem~\cite{Turan1941, Aigner2010}, at least (roughly) $\e/3 m$ edges must be removed from $G$ (in particular from $K_j$) in order to make $G$ triangle-free in the case where $K_j$ is a clique. The lower bound of Corollary~\ref{cor:triangle-pt} matches the known lower bound due to Alon et al.~\cite{Alon2008} in the regime where the average degree $d = 2 m / n$ satisfies $d = O(n^{1/3})$. In the range $d = \omega(n^{1/3})$, the lower bounds in~\cite{Alon2008} are strictly stronger.\footnote{Alon et al.~\cite{Alon2008} prove that a lower bound of $\Omega(n^{1/3})$ holds even for all $d = O(n^{1 - \nu(n)})$ for some function $\nu(n) = o(1)$. This is in contrast for the case $d = \Omega(n)$, where $O(f(\e))$ queries are sufficient for some function $f$.} 

In the dense graph model~\cite{Goldreich1998}, only pair queries are allowed, but the distances between graphs are normalized by $n^2$ (rather than $\abs{E}$ as in the general graph model). Thus every graph with $m = o(n^2)$ is $\e$-close to the graph with no edges. In this case, the types of embeddings we present in this paper cannot yield lower bounds that are better than $\Omega(1/\e)$. Specifically, in all embeddings of disjointness we consider, each edge in $\calE(x, y)$ depends on a single bit of $x, y \in \set{0, 1}^N$. However, the value of $\disj(x, y)$ can vary by changing a single bit of $x$ or $y$. In order for Theorem~\ref{thm:query-lb} to give property testing lower bounds via an embedding of $\disj$, changing a single bit of $x$ or $y$ must change $\e n^2$ edges in $\calE(x, y)$. Thus, to obtain stronger lower bounds (e.g., lower bounds that grow as a function of $n$), either different communication primitives must be considered, or the embedding $\calE$ must be more complicated (using some nontrivial encoding of $x$ and $y$).

\begin{que}
  Can Theorem~\ref{thm:query-lb} be applied to obtain any nontrivial (i.e., $\omega(1/\e)$) lower bound for any ``natural'' graph problem in the dense graph property testing model?
\end{que}

\bibliographystyle{plain}
\bibliography{lb-CC}

\begin{thebibliography}{10}

\bibitem{Aigner2010}
Martin Aigner and G{\"u}nter~M Ziegler.
\newblock {\em Proofs from the Book}.
\newblock Springer-Verlag Berlin Heidelberg, 4th edition, 2010.

\bibitem{Aliak}
Maryam Aliakbarpour, Amartya~Shankha Biswas, Themistoklis Gouleakis, John
  Peebles, Ronitt Rubinfeld, and Anak Yodpinyanee.
\newblock Sublinear-time algorithms for counting star subgraphs with
  applications to join selectivity estimation.
\newblock {\em CoRR}, abs/1601.04233, 2016.
\newblock \url{http://arxiv.org/abs/1601.04233}.

\bibitem{Alon2008}
Noga Alon, Tali Kaufman, Michael Krivelevich, and Dana Ron.
\newblock Testing triangle-freeness in general graphs.
\newblock {\em SIAM Journal on Discrete Mathematics}, 22(2):786--819, 2008.

\bibitem{Alon1996}
Noga Alon, Yossi Matias, and Mario Szegedy.
\newblock The space complexity of approximating the frequency moments.
\newblock In {\em Proceedings of the twenty-eighth annual ACM symposium on
  Theory of computing}, pages 20--29. ACM, 1996.

\bibitem{Blais2012}
Eric Blais, Joshua Brody, and Kevin Matulef.
\newblock Property testing lower bounds via communication complexity.
\newblock {\em computational complexity}, 21(2):311--358, Jun 2012.

\bibitem{Eden2015}
T.~Eden, A.~Levi, D.~Ron, and C.~Seshadhri.
\newblock Approximately counting triangles in sublinear time.
\newblock In {\em 2015 IEEE 56th Annual Symposium on Foundations of Computer
  Science}, pages 614--633, Oct 2015.

\bibitem{CLIQUES}
Talya Eden, Dana Ron, and C.~Seshadhri.
\newblock On approximating the number of $k$-cliques in sublinear time.
\newblock {\em CoRR}, abs/1707.04858, 2017.

\bibitem{DegMoments}
Talya Eden, Dana Ron, and C.~Seshadhri.
\newblock Sublinear time estimation of degree distribution moments: The
  degeneracy connection.
\newblock In {\em 44th International Colloquium on Automata, Languages, and
  Programming, {ICALP} 2017, July 10-14, 2017, Warsaw, Poland}, pages
  7:1--7:13, 2017.
\newblock Full version available at \url{https://arxiv.org/abs/1604.03661}.

\bibitem{Eden2017}
Talya Eden and Will Rosenbaum.
\newblock {On Sampling Edges Almost Uniformly}.
\newblock In Raimund Seidel, editor, {\em 1st Symposium on Simplicity in
  Algorithms (SOSA 2018)}, volume~61 of {\em OpenAccess Series in Informatics
  (OASIcs)}, pages 7:1--7:9, Dagstuhl, Germany, 2018. Schloss
  Dagstuhl--Leibniz-Zentrum fuer Informatik.
\newblock Full version available at \url{https://arxiv.org/abs/1706.09748}.

\bibitem{feige06}
Uri Feige.
\newblock On sums of independent random variables with unbounded variance and
  estimating the average degree in a graph.
\newblock {\em SIAM Journal on Computing}, 35(4):964--984, 2006.

\bibitem{Gol13_CC}
Oded Goldreich.
\newblock On the communication complexity methodology for proving lower bounds
  on the query complexity of property testing.
\newblock {\em Electronic Colloquium on Computational Complexity {(ECCC)}},
  20:73, 2013.

\bibitem{Goldreich1998}
Oded Goldreich, Shari Goldwasser, and Dana Ron.
\newblock Property testing and its connection to learning and approximation.
\newblock {\em Journal of the ACM (JACM)}, 45(4):653--750, 1998.

\bibitem{GR02}
Oded Goldreich and Dana Ron.
\newblock Property testing in bounded degree graphs.
\newblock {\em Algorithmica}, pages 302--343, 2002.

\bibitem{Goldreich2008}
Oded Goldreich and Dana Ron.
\newblock Approximating average parameters of graphs.
\newblock {\em Random Structures \& Algorithms}, 32(4):473--493, 2008.

\bibitem{Gonczarowski2015}
Yannai~A. Gonczarowski, Noam Nisan, Rafail Ostrovsky, and Will Rosenbaum.
\newblock A stable marriage requires communication.
\newblock In {\em Proceedings of the Twenty-sixth Annual ACM-SIAM Symposium on
  Discrete Algorithms}, SODA '15, pages 1003--1017, Philadelphia, PA, USA,
  2015. Society for Industrial and Applied Mathematics.

\bibitem{GRS11}
Mira Gonen, Dana Ron, and Yuval Shavitt.
\newblock Counting stars and other small subgraphs in sublinear-time.
\newblock {\em SIAM Journal on Discrete Mathematics}, 25(3):1365--1411, 2011.

\bibitem{Hromkovic2013}
J.~Hromkovi{\v{c}}.
\newblock {\em Communication Complexity and Parallel Computing}.
\newblock Texts in Theoretical Computer Science. An EATCS Series. Springer
  Berlin Heidelberg, 2013.

\bibitem{Kalyanasundaram1992}
Bala Kalyanasundaram and Georg Schintger.
\newblock The probabilistic communication complexity of set intersection.
\newblock {\em SIAM Journal on Discrete Mathematics}, 5(4):545--557, 1992.

\bibitem{Kalyanasundaram1992-1}
Bala Kalyanasundaram and Georg Schnitger.
\newblock {\em Communication Complexity and Lower Bounds for Sequential
  Computation}, pages 253--268.
\newblock Vieweg+Teubner Verlag, Wiesbaden, 1992.

\bibitem{Karchmer1990}
Mauricio Karchmer and Avi Wigderson.
\newblock Monotone circuits for connectivity require super-logarithmic depth.
\newblock {\em SIAM Journal on Discrete Mathematics}, 3(2):255--265, 1990.

\bibitem{KKR04}
Tali Kaufman, Michael Krivelevich, and Dana Ron.
\newblock Tight bounds for testing bipartiteness in general graphs.
\newblock {\em SIAM Journal on computing}, 33(6):1441--1483, 2004.

\bibitem{Kushilevitz2006}
E.~Kushilevitz and N.~Nisan.
\newblock {\em Communication Complexity}.
\newblock Cambridge University Press, 2006.

\bibitem{Motwani2007}
Rajeev Motwani, Rina Panigrahy, and Ying Xu.
\newblock Estimating sum by weighted sampling.
\newblock In {\em ICALP}, volume 4596, pages 53--64. Springer, 2007.

\bibitem{NW61}
C.~St.~JA. Nash-Williams.
\newblock Edge-disjoint spanning trees of finite graphs.
\newblock {\em Journal of the London Mathematical Society}, 1(1):445--450,
  1961.

\bibitem{NW64}
C.~St.~JA. Nash-Williams.
\newblock Decomposition of finite graphs into forests.
\newblock {\em Journal of the London Mathematical Society}, 1(1):12--12, 1964.

\bibitem{Parnas2002}
Michal Parnas and Dana Ron.
\newblock Testing the diameter of graphs.
\newblock {\em Random Structures \& Algorithms}, 20(2):165--183, 2002.

\bibitem{Paturi1986}
Ramamohan Paturi and Janos Simon.
\newblock Probabilistic communication complexity.
\newblock {\em Journal of Computer and System Sciences}, 33(1):106--123, 1986.

\bibitem{Razborov1992}
Alexander~A. Razborov.
\newblock On the distributional complexity of disjointness.
\newblock {\em Theoretical Computer Science}, 106(2):385--390, 1992.

\bibitem{DasSarma2012}
Atish~Das Sarma, Stephan Holzer, Liah Kor, Amos Korman, Danupon Nanongkai,
  Gopal Pandurangan, David Peleg, and Roger Wattenhofer.
\newblock Distributed verification and hardness of distributed approximation.
\newblock {\em SIAM Journal on Computing}, 41(5):1235--1265, 2012.

\bibitem{Turan1941}
Paul Tur\'{a}n.
\newblock On an extremal problem in graph theory.
\newblock {\em Matematikai \'{e}s Fizikai Lapok}, 48:436--452, 1941.

\bibitem{Yao1979}
Andrew Chi-Chih Yao.
\newblock Some complexity questions related to distributive
  computing(preliminary report).
\newblock In {\em Proceedings of the Eleventh Annual ACM Symposium on Theory of
  Computing}, STOC '79, pages 209--213, New York, NY, USA, 1979. ACM.

\end{thebibliography}

\clearpage
\appendix

\section{Counting Edges with Degree Queries}
\label{sec:edge-counting}
In this appendix, we prove the following result due to Feige~\cite{feige06}.

\begin{thm}[\cite{feige06}]
  \label{thm:edge-counting}
  For any $\e > 0$, any algorithm $\mA$ that for any graph $G$ with $m = \Omega(n)$ computes an estimate $\widehat{m}$ of $m$ that satisfies $m \leq \widehat{m} \leq (2 - \e) m$ using only degree queries requires $\Omega(n^2 / m)$ queries.
\end{thm}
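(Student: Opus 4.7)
The plan is to apply Theorem~\ref{thm:query-lb} with $f = \disj$ on inputs of size $N = \Theta(n^2/m)$, reducing the problem to distinguishing graphs with $m$ versus $\approx 2m$ edges using only degree queries. Since a $(2-\eps)$-approximation separates these edge counts, any such algorithm must solve the underlying disjointness instance.

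I would define the embedding $\calE$ as follows. Partition $V = L \cup T$ with $|L| = |T| = n/2$, and further partition $L$ into $N = n/(2s)$ blocks $B_1, \ldots, B_N$ of size $s = 4m/n$ each, so $N = \Theta(n^2/m)$; the set $T$ is always an independent set. For $(x,y) \in \set{0,1}^N \times \set{0,1}^N$ satisfying the unique-intersection promise, if $\disj(x,y) = 1$ let $\calE(x,y)$ restricted to $L$ be a fixed $s$-regular graph $H_0$ (e.g., a cyclic graph), yielding exactly $ns/4 = m$ edges; if $\disj(x,y) = 0$ with unique intersection at $j^*$, let $\calE(x,y)$ consist of all edges between $B_{j^*}$ and $L \setminus B_{j^*}$ and no other edges, yielding $s(n/2 - s) = 2m - O(m^2/n^2)$ edges. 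Taking $g$ to be the partial function returning $1$ on YES graphs and $0$ on NO graphs, any $(2-\eps)$-approximation algorithm distinguishes these cases (for $m = o(n^2)$ and any fixed $\eps > 0$), hence computes $g$.

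The design ensures that every vertex's degree depends on at most one pair of bits from $(x,y)$: each $u \in T$ has degree $0$ unconditionally, while each $v \in B_j$ has degree $s$ when $x_j \wedge y_j = 0$ and degree $n/2 - s$ when $x_j \wedge y_j = 1$. The delicate point here—and the main obstacle in the argument—is ensuring that non-active block vertices $v \in B_j$ with $j \ne j^*$ still have degree exactly $s$ in the NO instance: each such $v$ must be connected to precisely $s$ vertices (namely, all of $B_{j^*}$), matching the $s$-regularity of the YES instance. If the degree sequence outside $B_{j^*}$ were not invariant, a degree query on a ``non-active'' vertex could leak information about $\disj(x,y)$ and blow up the per-query simulation cost.

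Given this design, Alice and Bob can simulate any degree query by exchanging at most the two bits $x_j$ and $y_j$ (or no bits at all for $u \in T$), so $\cost_{\calE}(q) \leq 2$ for every degree query. Applying Theorem~\ref{thm:query-lb} together with $R(\disj) = \Omega(N) = \Omega(n^2/m)$ from Theorem~\ref{thm:disj-lb} yields the desired $\Omega(n^2/m)$ lower bound on the expected number of degree queries.
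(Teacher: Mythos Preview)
Your proof is correct and follows the same high-level strategy as the paper: embed $\disj$ with the unique-intersection promise so that every degree query reveals at most one bit pair $(x_j, y_j)$, then invoke Theorem~\ref{thm:query-lb}. The specific gadget differs, however. The paper partitions $V$ into three equal parts $U, V, W$, arranging that vertices in $V \cup W$ \emph{always} have degree $k$ regardless of $x,y$ (they form complete bipartite blocks $V_i \cup W_i$ in the disjoint case and are all adjacent to $U_j$ in the intersecting case), while vertices in $U_j$ have degree $0$ or $2n/3$ depending on $x_j \wedge y_j$; this yields exactly $m$ versus exactly $2m$ edges. Your construction instead makes non-intersecting blocks \emph{coincidentally} have the same degree $s$ in both instances (via $s$-regularity of $H_0$ in YES and $\abs{B_{j^*}} = s$ in NO), which is a neat trick but only gives $\approx 2m$ edges in the NO case and thus needs the side assumption $m < c\,\eps n^2$. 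Since the claimed bound is trivial for $m = \Theta(n^2)$, this costs nothing, and both routes yield $\Omega(n^2/m)$.
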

\begin{proof}
  We apply Theorem~\ref{thm:query-lb} using $f = \disj$ with $N = n / 3 k$ (where $k$ is a parameter to be chosen later) and the promise that $x, y \in \set{0, 1}^N$ are either disjoint or uniquely intersecting. We construct $\calE(x, y) = (U \cup V \cup W, E)$ as follows. We partition the vertices of $\calE$ into three sets, $U$, $V$, and $W$ each of size $n / 3$. For any $k \leq n/3$, we partition $U = U_1 \cup U_2 \cup \cdots \cup U_k$ into $N = n / 3 k$ sets of size $k$, and similarly with $V$ and $W$.

%%   \begin{figure}[b!]	
%% 	\EstEdges
%% 	\caption{
%% 	}
%% 	\label{Fig:est_edges}
%% \end{figure}

  The edge set $E$ of $\calE(x, y)$ is constructed as follows. If $\disj(x, y) = 1$, then $U$ is a set of isolated vertices, and for each $i \in [N]$, $V_i \cup W_i$ is a complete bipartite graph. On the other hand, if $\disj(x, y) = 0$ with $x_j = y_j = 1$, then each vertex in $V \cup W$ shares an edge with each $u \in U_j$ and there are no other edges. %See Figure~\ref{Fig:est_edges}.
  Formally,
  \[
  E =
  \begin{cases}
    \bigcup_{i = 1}^N \set{(v, w) \in V_i \times W_i} &\text{if } \disj(x, y) = 1\\
    \paren{U_j \times V} \cup \paren{U_j \times W} &\text{if } x_j = y_j = 1.
  \end{cases}
  \]

  Observe that if $\disj(x, y) = 1$, then $\calE$ satisfies $m = n k / 3$, while if $\disj(x, y) = 0$, then $m = 2 n k / 3$. We define the partial function $g : \calG_n \to \set{0, 1}$ by
  \[
  g(G) =
  \begin{cases}
    1 &\text{if } m \leq n k / 3\\
    0 & \text{if } m \geq 2 n k / 3.
  \end{cases}
  \]
  It is then clear that $(\calE, g)$ is an embedding of $\disj$.

  In order to apply Theorem~\ref{thm:query-lb}, we must show that degree queries can be efficiently simulated by Alice and Bob holding $x$ and $y$, respectively. To this end, for all $v \in V \cup W$, we have $d(v) = k$, independent of $x$ and $y$. Hence, such a query can be simulated without communication. For any $j \in [N]$ and $u \in U_j$, Alice and Bob can compute $d(u)$ by exchanging $x_j$ and $y_j$. Specifically,
  \[
  d(u) =
  \begin{cases}
    2 n / 3 &\text{if } x_j = y_j = 1,\\
    0 &\text{otherwise.}
  \end{cases}
  \]
  Since each degree query can be simulated using at most $B = 2$ bits of communication, Theorems~\ref{thm:query-lb} and~\ref{thm:disj-lb} together imply that any algorithm that distinguishes graphs satisfying $m \leq n k / 3$ from those satisfying $m \geq 2 n k / 3$ requires $\Omega(N)$ degree queries. Since $N = n / 3 k$ and $m = n k / 3$ in the disjoint case, we have $N = \Theta(n^2 / m)$, and the theorem follows.
\end{proof}

%% \begin{rem}
%%   The lower bound of $\Omega(n^2 / m)$ Theorem~\ref{thm:edge-counting} also holds in a model that allows degree and pair queries.\footnote{Neighbor queries alone admit an algorithm that uses $O(n / \sqrt{m})$ queries~\cite{Goldreich2008} to obtain a $(1 + \e)$ factor approximation to $m$.} However, 
%% \end{rem}

\section{Counting Cliques}
\label{sec:cliques}
\begin{thm}[\cite{CLIQUES}]
  \label{thm:cliques_second}
  Any algorithm $\mA$ that for every graph on $n$ vertices and $m$ edges computes a multiplicative approximation to $C_r$---the number of $r$-cliques in the graph---must perform $\Omega\left(\min\left\{m, \frac{{m}^{r/2}}{C_r\cdot (c r)^r}\right\}\right)$ queries, for some absolute constant $c > 0$.
\end{thm}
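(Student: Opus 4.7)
The plan is to adapt the triangle construction of Theorem~\ref{thm:triangles} to $r$-cliques by enlarging the ``center'' set $S$ into a complete $(r-2)$-partite graph, so that each intersection edge spawns not one but $s^{r-2}$ copies of $K_r$ (one per transversal of $S$). I would apply Theorem~\ref{thm:query-lb} to an embedding of $\inter_k$ whose parameters $\ell, s, k$ are tuned at the end.

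\textbf{Construction.} Let $N = \ell^2$, identify $\set{0,1}^N$ with $\set{0,1}^{\ell\times\ell}$, and build $\calE(x,y) = (V,E)$ with $V$ partitioned into outer sets $A, A', B, B'$ of size $\ell$, inner sets $S_1,\ldots,S_{r-2}$ of size $s$, and an isolated set $C$ padding out to $n$ vertices. Writing $S = S_1\cup\cdots\cup S_{r-2}$, the edges are: a complete bipartite graph between $S_l$ and $S_{l'}$ whenever $l\neq l'$ (so $S$ has exactly $s^{r-2}$ $(r-2)$-cliques, one per transversal); all of $A\times S$ and $B\times S$; and, for each $(i,j)\in[\ell]^2$, the same mirror rule as in Theorem~\ref{thm:triangles}, namely the edges $(a_i,b_j),(a'_j,b'_i)$ if $x_{ij}=y_{ij}=1$ and $(a_i,a'_j),(b_j,b'_i)$ otherwise. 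Let $g(G)=1$ iff $G$ contains a $K_r$.

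\textbf{Structural claim.} The graph $\calE(x,y)$ contains exactly $k\cdot s^{r-2}$ copies of $K_r$ when $x,y$ are $k$-intersecting, and none when they are disjoint. I would prove this via four observations: (i) each of $A,A',B,B',S_1,\ldots,S_{r-2}$ is an independent set, so an $r$-clique uses at most one vertex from each; (ii) no vertex of $A'\cup B'$ is adjacent to any vertex of $S$; (iii) in the disjoint case there are no $A$-$B$ edges, so any clique meeting $S$ gets at most one additional vertex (from $A$ or $B$) and is therefore of size at most $r-1$, while a clique avoiding $S$ lives in the bipartite-like structure $A\cup A'\cup B\cup B'$ which has no triangle (hence no $K_r$ for $r\ge 3$); (iv) in the intersecting case, every intersection edge $(a_i,b_j)$ together with any transversal of $S$ forms a $K_r$, and no double counting occurs since a clique has at most one $A$- and one $B$-vertex.

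\textbf{Query simulation.} The construction makes every vertex degree independent of $(x,y)$: each $a_i$ has $(r-2)s+\ell$ neighbors ($(r-2)s$ in $S$ and exactly one per column $j$, either $b_j$ or $a'_j$), and analogous counts hold for the other classes. Degree queries therefore cost zero bits, while neighbor, pair, and uniform random edge queries depend on $(x,y)$ only through a single bit $x_{ij}y_{ij}$ and cost at most $2$ bits of communication, exactly as in the proof of Theorem~\ref{thm:triangles}.

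\textbf{Parameter tuning.} A direct count gives
\[
m \;=\; \binom{r-2}{2}s^2 \,+\, 2(r-2)\ell s \,+\, 2\ell^2 \;=\; \Theta\!\bigl((\ell+rs)^2\bigr).
\]
Maximizing the lower bound $\Omega(\ell^2/k)=\Omega(\ell^2 s^{r-2}/C_r)$ under this constraint, a Lagrange calculation places the optimum at $\ell, s = \Theta(\sqrt{m}/r)$, giving $\ell^2 s^{r-2} = \Theta(m^{r/2}/r^r)$ and thus a lower bound of $\Omega(m^{r/2}/(C_r(cr)^r))$ for an absolute constant $c>0$, provided $C_r\geq s^{r-2}$ so that $k\geq 1$ is feasible. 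In the complementary regime $C_r < s^{r-2}$ I would instead set $s = C_r^{1/(r-2)}$ and $k=1$ (each intersection then yields exactly $C_r$ cliques), in which case $\ell$ may grow up to $\Theta(\sqrt m)$ and the bound becomes $\Omega(\ell^2) = \Omega(m)$. Combining the two settings yields $\Omega\!\bigl(\min\set{m,\ m^{r/2}/(C_r(cr)^r)}\bigr)$, as claimed.

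\textbf{Main obstacle.} The most delicate step is the ``no $r$-cliques in the disjoint case'' portion of the structural claim. For $r\geq 4$, several mixed configurations must be ruled out (for instance, those trying to combine an $A$-$A'$ edge with a $(r-2)$-clique in $S$), and they all fail precisely because of invariant (ii): $A'\cup B'$ is disconnected from $S$. Preserving this invariant while still keeping all degrees oblivious to $(x,y)$, so that every query can be simulated with $O(1)$ bits, is the balance that must be struck.
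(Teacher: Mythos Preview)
Your proposal is correct and follows essentially the same construction and argument as the paper: both enlarge the single center set $S$ of the triangle proof into $r-2$ mutually complete parts $S_1,\dots,S_{r-2}$, both use the identical mirror rule on $A,A',B,B'$, and both reduce from $\inter_k$ with $N=\ell^2$ and the same query-simulation argument. The only cosmetic differences are that you carry a separate size parameter $s$ for the $S_i$ (the paper simply takes $s=\ell$ from the start, which your optimization recovers anyway), and in the small-$C_r$ regime you shrink $s$ whereas the paper deletes edges inside $S$ to reduce its $(r-2)$-clique count; both achieve the same effect.
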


\begin{proof}
  As with the proof of Theorem~\ref{thm:triangles} we appeal to Theorem~\ref{thm:query-lb} using \kinter\ as the communication primitive; the construction is a generalization to that in the proof of Theorem~\ref{thm:triangles}. For simplicity, we initially assume that $C_r = \Omega(m^{r/2 -1})$---the case for smaller $C_r$ is discussed at the end of the proof. As before, $\ell$ is a parameter and $N = \ell^2$ is the size of the $\inter_k$ instance. For all $x, y \in \set{0, 1}^N = \set{0, 1}^{\ell \times \ell}$, $\calE(x, y) = (V, E)$ is constructed as follows. We partition $V$ into $r + 3$ sets, $A, A', B, B', S_1, \ldots, S_{r - 2}, C$. The first $r + 2$ sets each have size $\ell$, while $C$ contains $n - (r + 2) \ell$ isolated vertices. The sets $A, A', B, B'$, and $S = S_1 \cup \cdots \cup S_{r-2}$ play analogous roles to the corresponding sets of the proof of Theorem~\ref{thm:query-lb}.

  The edges of $\calE(x, y)$ are defined as follows. For all $i < j \leq r - 2$, $E$ contains all possible edges between $S_i$ and $S_j$. That is, for all $s_i \in S_i$ and $s_j \in S_j$ we have $(s_i, s_j) \in E$. For all $a \in A$, $b \in B$, and $s \in S = S_1 \cup \cdots \cup S_{r - 2}$, we also have $(a, s), (b, s) \in E$. Finally, for all $i, j \in [\ell]$, edges between $A, A', B$ and $B'$ are determined from $x$ and $y$:
  \[
  \begin{cases}
    (a_i, b_j), (a_j', b_i') \in E &\text{if } x_{i j} = y_{i j} = 1\\
    (a_i, a_j'), (b_j, b_i') \in E &\text{otherwise}.
  \end{cases}
  \]
  We refer to Figure~\ref{Fig:cliques} for an illustration with $r = 2$.

  \begin{figure}[tb!]
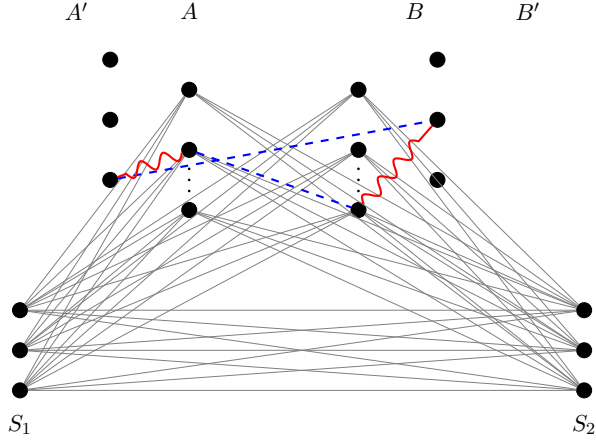

    \families
    \caption{An illustration of the two graph constructions of $\calE(x,y)$ for $r=4$ and $C_r =\Theta(k \cdot M^{(r-2)/2})$. If $x_{ij} = y_{ij} = 1$ then the dashed blue edges $(a_i,b_j), (a'_j, b'_i)$ are in $E$, and otherwise the red curly edges $(a_i,a'_j), (b_j,b'_i)$ are in $E$.}
    \label{Fig:cliques}
  \end{figure}

  Observe that in the case where $\inter_k(x, y) = 0$ (i.e, $x$ and $y$ are disjoint), then $\calE(x, y)$ does not contain any $r$-cliques---$C_r = 0$. However, if $x_{i j} = y_{i j} = 1$, then for all $(s_1, s_2, \ldots, s_{r-2}) \in S_1 \times S_2 \times \cdots \times S_{r-2}$, the set $\set{a_i, b_j, s_1, s_2, \ldots, s_{r-2}}$ is a clique in $\calE(x, y)$. Since there are $\ell^{r-2}$ such choices of $s_1, s_2, \ldots, s_{r-2}$, the edge $(a_i, b_j)$ is contained in $\ell^{r-2}$ cliques of size $r$. Thus, if $\inter_k(x, y) = 1$, we have
  \begin{equation}
    \label{eqn:clique-count}
    C_r(\calE(x, y)) \geq k \ell^{r - 2}.
  \end{equation}
  We compute the number of edges in $\calE(x, y)$ to be
  \begin{equation}
    \label{eqn:edge-count}
    m = \ell^2 \paren{\binom{r-2}{2} + 2 (r - 2) + 2} = \Theta(\ell^2 r^2).
  \end{equation}

  Every degree, neighbor, and pair query, as well as random edge samples can be simulated by Alice and Bob using at most $2$ bits of communication per query---the proof is identical to the argument in the proof of Theorem~\ref{thm:triangles}. Therefore, Theorem~\ref{thm:query-lb} and Corollary~\ref{cor:k-intersection} imply that distinguishing graphs with $C_r = 0$ from those with $C_r \geq k \ell^{r - 2}$ requires $\Omega(\ell^2 / k)$ queries. Solving Equation~\ref{eqn:clique-count} for $k$, and Equation~\ref{eqn:edge-count} for $\ell$ yield\footnote{For all graphs with $m$ edges, we have $C_r = O(m^{r/2})$. Our construction asymptotically achieves this upper bound on $C_r$ when $k = \Omega(\ell^2)$. Thus, the construction yields (asymptotically) all possible ranges of $n$, $m$, and $C_r$.}
  \[
  k = \frac{C_r}{\ell^{r-2}} \quad\text{and}\quad \ell = \Theta\paren{\frac{\sqrt m}{r}}\;.
  \]
  Substituting these expression into the lower bound $\Omega(\ell^2 / k)$, we find that $\Omega(m^{r/2} / C_r (c r)^r)$ queries are necessary to obtain any multiplicative approximation to $C_r$ for some constant $c > 0$.

  In the case where $C_r = o(m^{r/2 - 1})$, we modify the construction above as follows. We remove edges from within $S$ (i.e., edges between the various $S_i$ and $S_j$) so that $S$ contains $\Theta(C_r)$ cliques of size $r - 2$. We take $k = 1$ so that (as before) if $\inter_k(x, y) = 0$, $\calE(x, y)$ contains no $r$-cliques, while if $\inter_k(x, y) = 1$, $\calE(x, y)$ contains $\Theta(C_r)$ cliques. In particular, the single edge $(a_i, b_j)$ corresponding to $x_{ij} = y_{ij} = 1$ participates in $\Theta(C_r)$ cliques: one for each $(r-2)$-clique within $S$. Since all edges between $A, B$, and $S$ remain, we maintain that $m = \Theta(\ell^2)$.
\end{proof}

%% We need to mention that this lower bound is new!

\section{Computing Moments of the Degree Distribution}
\label{sec:moments}
Let $M_s=\sum_{v\in V}d^s(v)$ denote the $s\th$ moment of the degree distribution of a graph. The problem of estimating $M_s$ in sublinear time in the general graph model was first studied in~\cite{GRS11} for general graphs and later in~\cite{Aliak} in a slightly different model which allows degree queries as well as access to uniform edge samples. In~\cite{DegMoments}, Eden et al. generalize the results of~\cite{GRS11} to graphs with bounded arboricity in the general graph model. The arboricity of a graph $G$ is a measure of its sparseness, that is essentially equal to the maximum of the average degree over all subgraphs $S$ of $G$ as proved in \cite{NW61, NW64}. In~\cite{DegMoments}, Eden et al.\ observe that the hardness of the moments estimation problem arises from the existence of small (hidden) dense subgraphs, and exploit the fact that in graphs with bounded arboricity, no such dense subgraphs can exists. This allows them to devise an algorithm for estimating the moments that has improved query complexity when the algorithm is given an upper bound on the arboricity of the graph.

%\sloppy
We prove the result of~\cite{DegMoments} which gives a lower bound of
\[
\Omega\left(\frac{n\alpha^{1/s}}{M_s^{1/s}} + \min\left \{\frac{n \cdot \alpha}{M_s^{1/s}},\; \frac{n^s \cdot \alpha}{M_s},\;  n^{1-1/s},\; \frac{n^{s-1/s}}{M_s^{1-1/s}} \right \} \right)
\]
for the problem of estimating $M_s$ when the graph has $n$ vertices and arboricity at most $\alpha$. We note that for every graph it holds that $\alpha\leq \sqrt m$, and hence if an upper bound on the arboricity is not known, then one can substitute $\alpha=\sqrt m$ and get the lower bound of estimating $M_s$ for general graphs. For more details see~\cite{DegMoments}.

We start with the following useful claim and definition.
\begin{claim}[Claim 12 and Footnote 4 in \cite{DegMoments}]\label{clm:alpha-bounds}
	For any graph $G$ with arboricity $\alpha(G)$, 
	\[\frac{M_s(G)}{n^s} \leq \alpha(G)\leq M_s(G)^{\frac{1}{s+1}}\;. \]
\end{claim}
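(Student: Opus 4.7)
The plan is to prove the two inequalities separately; both are immediate consequences of basic facts about arboricity. For the right inequality $\alpha(G) \leq M_s(G)^{1/(s+1)}$, I would invoke the Nash--Williams characterization to produce a witness subgraph $H$ on $k$ vertices with $|E(H)| \geq \alpha(k-1)$. Two observations then follow. First, $|E(H)| \leq \binom{k}{2}$ combined with $|E(H)| \geq \alpha(k-1)$ gives $\alpha \leq k/2$, i.e.\ $k \geq 2\alpha$. Second, $\sum_{v \in V(H)} d(v) \geq 2|E(H)| \geq \alpha k$, so the average $G$-degree of a vertex in $V(H)$ is at least $\alpha$. Applying the power-mean (Jensen) inequality to $x \mapsto x^s$ then yields
\[
  M_s(G) \;\geq\; \sum_{v \in V(H)} d(v)^s \;\geq\; k \cdot \alpha^s \;\geq\; 2\alpha^{s+1},
\]
and the bound $\alpha \leq M_s^{1/(s+1)}$ follows.

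For the left inequality $M_s(G)/n^s \leq \alpha(G)$, I would use two trivial facts: every vertex satisfies $d(v) \leq n-1$, and the total edge count satisfies $m \leq \alpha(n-1)$ (by Nash--Williams applied to $H = G$ itself). Combining them yields
\[
  M_s(G) \;=\; \sum_v d(v)^s \;\leq\; (n-1)^{s-1} \sum_v d(v) \;=\; 2m (n-1)^{s-1} \;\leq\; 2\alpha(n-1)^s,
\]
which gives the claim up to an absolute constant factor.

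The main obstacle is the right inequality, since it requires the densest-subgraph machinery and simultaneous use of two facts about $H$ (its size $k \geq 2\alpha$ and the $G$-degree sum $\sum_{v \in V(H)} d(v) \geq \alpha k$); the left inequality is essentially a one-line calculation. Matching the precise constants in the claim as written may require a slightly tighter argument---e.g., using the orientation of $G$ with maximum out-degree at most $\alpha$, which exists for graphs of arboricity $\alpha$---to absorb a stray factor of $2$, but the correct exponents on $\alpha$, $n$, and $M_s$ already emerge from the naive bounds above.
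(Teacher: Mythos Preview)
The paper does not give its own proof of this claim; it is quoted verbatim from~\cite{DegMoments} and used as a black box. So there is nothing to compare against, and the relevant question is simply whether your argument is sound.

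Your approach is correct and standard. For the upper bound $\alpha \le M_s^{1/(s+1)}$, invoking Nash--Williams to obtain a dense witness subgraph $H$, then combining $|E(H)| \le \binom{k}{2}$ with $|E(H)| \gtrsim \alpha(k-1)$ to get $k \gtrsim \alpha$, and finally applying Jensen to $\sum_{v \in V(H)} d_G(v) \ge 2|E(H)| \gtrsim \alpha k$, is exactly the right chain of ideas; it yields $M_s \ge c\,\alpha^{s+1}$ for an absolute constant $c$. For the lower bound, your one-line computation $M_s \le (n-1)^{s-1}\cdot 2m \le 2\alpha(n-1)^s$ is also correct. Both directions lose only absolute constants, which you explicitly flag.

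One remark that may save you effort: the inequalities as literally written in the claim (with implicit constant~$1$) are not true in general. For instance, take $G = K_n$ and $s = 1$: then $M_1/n = n-1$ while $\alpha(K_n) = \lceil n/2 \rceil$, so the left inequality fails by a factor of roughly~$2$. The claim is therefore meant to hold up to absolute constants (this is how it is used later in the appendix, where everything is asymptotic), and your argument already delivers that. There is no need to chase the exact constant via orientations or otherwise.
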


\begin{dfn}
        For a value $\wtM_s$, we let $\gmom:\mG_{n} \to \{0,1\}$ be defined by
	\[
	\gmom(G) =
	\begin{cases}
	0 & \text{if } M_s(G) \leq \wtM_s\\
	1 & \text{if } M_s(G) \geq c\cdot \wtM_s\,,
	\end{cases}
	\]
        where $c$ is a fixed constant to be determined later.
\end{dfn}

%The following is the lower bound of Theorem 7 in \cite{DegMoments} up to a logarithmic factor.
\begin{thm}[Thm. 7 in \cite{DegMoments}]
  \label{thm:moments_1term}
	Let $G$ be a graph over $n$ vertices and with arboricity $\alpha$, and
  let $\mA$ be a constant-factor approximation algorithm for $M_s(G)$ with allowed queries $Q$, consisting of neighbor, degree, and pair queries. The expected query complexity of $\mA$ is $\Omega\left(\frac{n\alpha^{1/s}}{M_s^{1/s}(G)}\right)$.
\end{thm}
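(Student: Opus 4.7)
My plan is to invoke Theorem~\ref{thm:query-lb} with the hard primitive $f = \disj$ under the unique-intersection promise, taking the instance size to be $N = \Theta(n\alpha^{1/s}/M_s^{1/s})$. The goal is to design an embedding $(\calE, \gmom)$ in which flipping a single coordinate $x_j \wedge y_j$ from $0$ to $1$ changes the $s$-th degree moment of the embedded graph by $\Theta(M_s)$, while arboricity stays bounded by $\alpha$ throughout, and every allowable query is determined by only $O(1)$ bits of $(x,y)$. Combined with the $\Omega(N)$ lower bound on $R(\disj)$, Theorem~\ref{thm:query-lb} would then deliver the advertised $\Omega(n\alpha^{1/s}/M_s^{1/s})$ bound.

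For the embedding itself, I would set $\ell = \Theta((M_s/\alpha)^{1/s})$ and $N = \Theta(n/\ell)$. By Claim~\ref{clm:alpha-bounds} one has $\alpha^{s+1} \leq M_s$, hence $\ell \geq \alpha$, so $\ell + \alpha = \Theta(\ell)$ and $N$ matches the target value. I would partition the vertex set of $\calE(x,y)$ into $N$ ``blocks'' $W_1, \dots, W_N$, each of the form $W_j = H_j \cup L_j$ with $|H_j| = \alpha$ (the ``heavy'' side) and $|L_j| = \ell$ (the ``light'' side), plus an auxiliary baseline set $C$ holding the remaining vertices. For each $j \in [N]$, if $x_j = y_j = 1$ I plant all $\alpha\ell$ edges of the complete bipartite graph $K_{\alpha,\ell}$ between $H_j$ and $L_j$; otherwise the block is edgeless. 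Onto $C$ I would place a fixed, query-independent graph (say a disjoint union of stars) of moment $\wtM_s/2 = \Theta(M_s)$ and arboricity at most $\alpha$, tuned so that both the disjoint and intersecting cases of $\disj$ yield moments within a constant factor of $\wtM_s$.

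I then need to verify three things. First, arboricity: a $K_{\alpha,\ell}$ with $\alpha \leq \ell$ is covered by the $\alpha$ stars rooted at its heavy vertices, so its arboricity is at most $\alpha$; since the blocks are vertex-disjoint and the baseline also respects the $\alpha$ bound, so does $\calE(x,y)$. Second, moment: an active block contributes $\alpha\ell^s + \ell\alpha^s = \Theta(\alpha\ell^s) = \Theta(M_s)$ thanks to $\ell \geq \alpha$, so under the unique-intersection promise the disjoint case has moment $\wtM_s/2$ while the intersecting case has moment at least $c\,\wtM_s$ for some fixed constant $c > 1$, making $(\calE, \gmom)$ a legitimate embedding of $\disj$. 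Third, simulation: every degree, neighbor, or pair query whose endpoints lie in some block $W_j$ is determined by $(x_j, y_j)$, so Alice and Bob handle it by exchanging these two bits, and queries restricted to $C$ need no communication at all.

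The step I expect to be trickiest is pinning down the right block gadget. The gadget has to pack $\Theta(M_s)$ moment onto $\Theta(\ell)$ vertices while respecting the arboricity budget $\alpha$, and these two demands pull in opposite directions: pure stars give the right moment per vertex but waste the arboricity budget (yielding only the weaker $n/M_s^{1/s}$ lower bound, i.e.\ the $\alpha = 1$ case), while cliques of size $\Theta(\alpha)$ saturate arboricity but extract only $\Theta(\alpha^{s+1})$ moment per gadget, which is too little whenever $M_s \gg \alpha^{s+1}$. The complete bipartite graph $K_{\alpha,\ell}$ with $\ell = (M_s/\alpha)^{1/s}$ is precisely the balanced choice that extracts the full $\alpha\ell^s = \Theta(M_s)$ moment per block while keeping arboricity at $\alpha$, and the inequality $\ell \geq \alpha$ coming from Claim~\ref{clm:alpha-bounds} is exactly what guarantees the feasibility of this gadget across the entire parameter regime in which the lower bound is nontrivial.
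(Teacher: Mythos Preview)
Your proposal is correct and follows essentially the same route as the paper: both reduce from $\disj$ under the unique-intersection promise, plant a complete bipartite gadget $K_{\alpha,\ell}$ with $\ell=\Theta((M_s/\alpha)^{1/s})$ in the block indexed by the intersecting coordinate, use the $\alpha$-star decomposition to certify arboricity, and add a fixed baseline subgraph carrying $\Theta(M_s)$ moment so that the two cases differ only by a constant factor. Your observation that $\ell\geq\alpha$ follows from Claim~\ref{clm:alpha-bounds} (so block size is $\Theta(\ell)$ and $N=\Theta(n\alpha^{1/s}/M_s^{1/s})$) is exactly the computation the paper relies on, and your query-simulation argument is the same $2$-bit exchange used throughout the paper.
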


\begin{proof}
  The proof uses an embedding of $\disj$, and is similar to the proof of Theorem~\ref{thm:subgraph-counting}. We modify the construction of the hidden subgraph so as to contribute a constant factor of $M_s$ to the resulting graph without increasing the arboricity of the graph. For a fixed $\wtM_s$, let we take $H$ to be a complete bipartite graph between sets of vertices of size $(c \cdot \wtM_s/\alpha)^{1/s}$ and $\alpha$ ($H$ will play the role of the cliques in the proof of Theorem~\ref{thm:subgraph-counting}). Then $M_s(H) = \alpha \cdot (c\cdot \wtM_s/\alpha) + (c\cdot \wtM_s/\alpha)^{1/s}\cdot \alpha^s \geq c \wtM_s$. Further, the arboricity $H$ is $\alpha$.\footnote{To see this, note that the edges of $H$ can be partitioned into $\alpha$ trees: each tree consists of the edges incident with a single vertex the $\alpha$-sized side of the bipartition.}

  The graph $G = (V, E) = \calE(x, y)$ is constructed from an arbitrary graph $G' = (V', E')$ on $n'$ vertices where $M_s(G')=\wtM_s$. We then form $V = V' \cup W$, where $\abs{W} = n'$. For fixed $k$, we partition $W$ into subsets each of size $k$, $W_1 \cup \cdots \cup W_{n' / k}$, where $k = (c \cdot \wtM_s/\alpha)^{1/s}+\alpha$. As in the proof of Theorem~\ref{thm:subgraph-counting}, each $W_i$ will be either a set of isolated vertices (if $i \notin x \cap y$) or isomorphic to $H$ (if $i \in x \cap y$). The calculation of $M_s(H)$ above implies that in the intersecting case $M_s(G) \geq (1 + c)\wtM_s$, while $M_s(G) = \wtM_s$ in the disjoint case. The remainder of the proof of Theorem~\ref{thm:moments_1term} is analogous to Theorem~\ref{thm:subgraph-counting}.
\end{proof}

We next give an alternate proof for Theorem~8 in~\cite{DegMoments}. The embedding $\calE$ we construct yields graphs that are almost identical to the constructions in the original proof. 
We note that some details in the original proof are omitted in~\cite{DegMoments} as the argument in~\cite{DegMoments} is based on the proofs of the second and third items of Theorem~5 in~\cite{GRS11}.
\sloppy
 \begin{thm}[Theorem~8 in~\cite{DegMoments}]
 	\label{thm:moments_2term}
	Let $G$ be a graph over $n$ vertices and with arboricity $\alpha$, and
        let $\mA$ be a constant-factor approximation algorithm for $M_s(G)$ with allowed queries $Q$, consisting of neighbor, degree, and pair queries. The expected query complexity of $\mA$ is \[\Omega\left(\min\left \{\frac{\alpha(G)\cdot n }{M_s(G)^{1/s}},\; n^{1-1/s},\;\frac{\alpha(G) \cdot n^s }{M_s(G)},\;  \frac{n^{s-1/s}}{M_s(G)^{1-1/s}} \right \}   \right).\]
\end{thm}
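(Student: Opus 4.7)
The approach is to apply Theorem~\ref{thm:query-lb} with $f = \disj$, following the same general strategy as in Theorem~\ref{thm:moments_1term} but with a more intricate embedding that exposes the four-term structure. The construction consists of a set $A\cup B$ of $\Theta(n)$ vertices each of uniform degree $d$, together with $d/\ell$ ``blocks'' $W_1,\ldots,W_{d/\ell}$ each of size $c$, arranged so that every vertex of $A\cup B$ has exactly $\ell$ neighbors in each block, and a ``hidden'' set $R$ of $\alpha$ vertices that is isolated from the rest of $V$ except when triggered by the input. The embedding $\calE$ takes $x,y\in\{0,1\}^N$ with $N=d/\ell$ and, for every index $j$ with $x_j=y_j=1$, adds all edges between $W_j$ and $R$. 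Under the promise of unique intersection, this produces a single ``heavy'' block $W_{j^*}$ whose vertices acquire $\alpha$ additional neighbors and whose induced degrees are $\Theta(\wtM_s^{1/s})$, giving a constant-factor jump in $M_s$; in the disjoint case, all degrees are determined by the base structure alone.

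The simulation step is essentially identical to previous proofs: degree, neighbor, and pair queries on $A\cup B$ are $(x,y)$-independent because every such vertex has a fixed set of $d$ neighbors in fixed positions, while queries involving a vertex of $W_j$ or $R$ can be answered by exchanging $x_j$ and $y_j$, giving per-query cost $B=O(1)$. Hence Theorem~\ref{thm:query-lb} together with Theorem~\ref{thm:disj-lb} yields a query lower bound of $\Omega(N)=\Omega(d/\ell)$ for any algorithm that constant-factor approximates $M_s$.

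The four terms arise by optimizing $d/\ell$ over the parameters $(d,\ell,c,\alpha)$ subject to three constraints that compete in different regimes: the vertex budget $|A\cup B|+c(d/\ell)+\alpha = O(n)$; the moment budget, which requires both that the disjoint instance has $M_s\le \wtM_s$ and that the intersecting instance exceeds $c \wtM_s$ for the fixed constant $c>1$ defining $\gmom$; and the arboricity constraint $\alpha(G)\le \alpha$. Depending on the relationships among $n$, $\alpha$, and $\wtM_s$, different subsets of these constraints are tight. For example, when the moment is spread roughly uniformly across $A\cup B$ with $d=\Theta((\wtM_s/n)^{1/s})$, one obtains the $\Omega(n^{1-1/s})$ bound; when the arboricity constraint is binding and $R$ carries the dense substructure, one obtains the $\Omega(\alpha n^s/\wtM_s)$ bound; the remaining two terms arise from analogous optimal tunings in the two complementary regimes, each recoverable from the same master construction by rebalancing $c$ against $\alpha$ and $\ell$ against $d$.

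The main technical obstacle will be verifying that the arboricity of $\calE(x,y)$ is indeed at most $\alpha$ in the intersecting instance, since the $W_{j^*}$--$R$ edges introduce the only potentially dense substructure. The verification will exhibit a Nash--Williams style decomposition of $E(G)$ into $\alpha$ forests; the lower bound from Claim~\ref{clm:alpha-bounds} is automatic once $M_s/n^s$ is small, but the upper bound requires an explicit construction of forests that absorbs both the bipartite ``$A\cup B$ to $W_j$'' edges and the ``$W_{j^*}$ to $R$'' edges, which constrains the allowable range of $(d,\ell,c,\alpha)$. A secondary subtlety, analogous to the case $C_r = o(m^{r/2-1})$ handled at the end of Theorem~\ref{thm:cliques_second}, is that for certain ranges of $\wtM_s$ the hidden block $R$ must be scaled down or omitted; in each such range one must separately verify that the disjoint-case moment lies strictly below the threshold while the intersecting-case moment exceeds $c\wtM_s$, so that any constant-factor approximation to $M_s$ decides $\disj$.
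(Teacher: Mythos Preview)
Your proposal has a genuine gap in the simulation step. You claim that queries to a vertex of $R$ can be answered by exchanging $x_j$ and $y_j$, but there is no index $j$ attached to a vertex $r\in R$: in your construction the neighborhood of $r$ is $\bigcup_{j:\,x_j=y_j=1} W_j$, so already the degree query $d(r)$ (which under the unique-intersection promise returns either $0$ or $c$) is equivalent to deciding $\disj(x,y)$ and cannot be simulated with $O(1)$ bits. A single query to any vertex of $R$ therefore collapses the reduction.

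The paper's construction avoids this by giving $R$ the opposite role. There $R$ is a \emph{fixed} clique on $\alpha$ vertices, present regardless of $(x,y)$; its sole purpose is to pin the arboricity at $\alpha$, and queries to $R$ require no communication. The input-dependent mechanism is instead an \emph{edge redirection} inside $A\cup B$: the $j$-th block of $\ell$ potential neighbors of each $a_i\in A$ points into $B$ when $x_j\wedge y_j=0$ and into $W_j$ when $x_j\wedge y_j=1$. Thus every vertex of $A\cup B$ has degree exactly $d$ regardless of $(x,y)$; each $W_j$ is isolated in the disjoint case and has vertices of degree $2n\ell/c$ in the intersecting case; and, crucially, every neighbor, degree, or pair query is tied to a single block index (namely $j=\lfloor r/\ell\rfloor$ for $\nbr_r$), so exchanging one coordinate pair $(x_j,y_j)$ suffices. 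Your base structure, in which every $a_i$ already has $\ell$ fixed neighbors in each $W_j$, additionally leaves the $W$ vertices with degree $\Theta(n\ell/c)$ in the disjoint case, so adding only $\alpha$ neighbors to a single block does not produce the required constant-factor jump in $M_s$ unless $\alpha$ dominates $n\ell/c$, which is not guaranteed across the four regimes.
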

 \begin{proof}
   Once again, we reduce from the problem of set disjointness. Let $f=\disj$ for $N\in \left \{\frac{\alpha(G)\cdot n}{M_s(G)^{1/s}},\; n^{1-1/s},\; \frac{\alpha(G) \cdot n^s}{M_s(G)},\;   \frac{n^{s-1/s}}{M_s(G)^{1-1/s}} \right \}$. In order to prove the different lower bounds we describe a function $\calE:\{0,1\}^N\times \{0,1\}^N \to \mG_{n'}$ that depends on the relations between $n,\alpha$ and $M_s$, and so that $n'=\Theta(n), \alpha(G)=\alpha$ and the value of $M_s(G)$ is some constant factor of $\wtM_s$ depending on whether or not $x$ and $y$ are disjoint. 

   As in~\cite{DegMoments}, we divide the proof into two cases depending on the relation between $\wtM_s$ and $n$. In each case consider two sub-cases that depend on the relation between $\alpha$ and $(\wtM_s/n)^{1/s}$.
 
   \paragraph{The case $\bm{M_s^{1/s}\leq n/c}$ for some constant $\bm{c>4}$}
   
   We start with some intuition behind the construction of the graphs. Let $G$ be a graph $G=A\cup B \cup R \cup W$ as follows. $A \cup B$ is a $d$-regular bipartite subgraph  over $2n$ vertices (where $A=\{a_1,\ldots, a_n\}$ and $B=\{b_1,\ldots, b_n\}$), $R$ is a clique over $\alpha$ vertices and $W$ consists of $d/\ell$ subgraphs $W_1, \ldots, W_{d/\ell}$ of $c$ isolated vertices each. In the case where $i \in x \cap y$, we modify the graph above such that $W_i$ consists of extremely high degree vertices so that its contribution to the $s\th$ moment amounts to a constant factor of $M_s$. 

   Specifically, Consider the following embedding $\calE:\{0,1\}^{d/\ell} \times \{0,1\}^{d/\ell}  \to \calG_{n'}$. Given $x,y \in \{0,1\}^{d/\ell}$, $G$ is a graph over $n'=2n+c\cdot (d/\ell)+\alpha$ vertices, where the exact values of $d$ and $\ell$ depend on the sub-case at hand, and will be set later. We show that in all cases $n'=\Theta(n)$ and $\alpha(G)=\alpha$. Depending on whether  or not $x$ and $y$ intersect, $M_s(G)\leq \wtM_s$ or $M_s(G)\approx c\cdot \wtM_s$. 

   For every vertex in $A\cup B$, we think of its set of neighbors as divided into $d/\ell$ blocks of size $\ell$ each. If the sets are disjoint, then all the neighbors in every block are connected to vertices on the corresponding side of the bipartite subgraph. Thus, for $a_i\in A$ a vertex's neighbors are $b_{i}, b_{i+1 \text{(mod n)}}, \ldots, b_{i+d \text{(mod n)}}$ (and analogously for $b_i \in B$). Otherwise, if $x$ and $y$ intersect on the $j\th$ index, then the $j\th$ block of neighbors of $a_i$ is in $W_j$ (rather than in $B$). In this case, the degree of every vertex in $W_j$ is $2n \cdot \ell/c$. The identity of the neighbors of $a_i \in A\cup B$ in $W_j$ depend on the value of $i$, where the first (by order of indices) $2\ell\cdot n/c$ vertices in $A\cup B$ are connected to the first $\ell$ vertices in $W_j$, and so on. See Figures~\ref{Fig:lb_families1} and~\ref{Fig:lb_families2} for an illustration.

\begin{figure}[tb!]
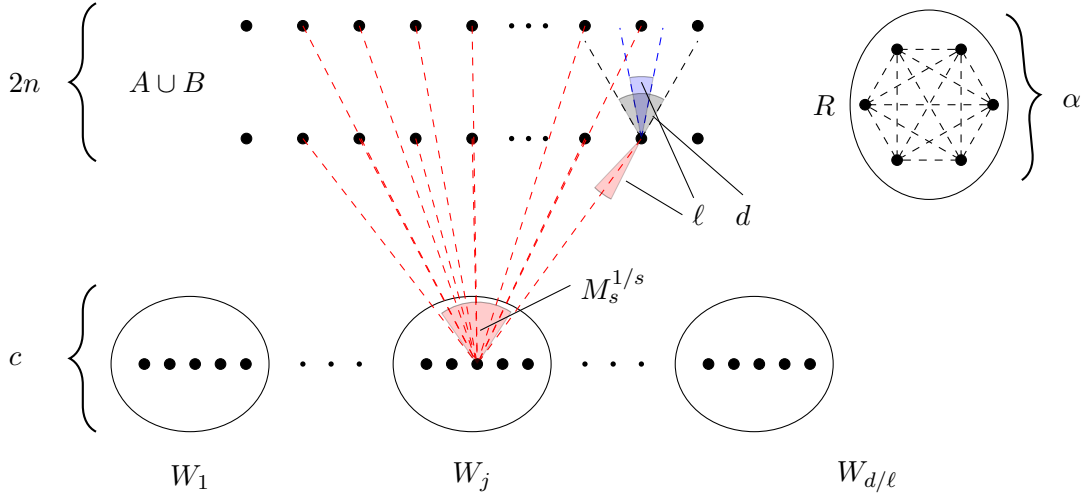
	
`	\MomentsLB
\caption{The construction of the graph $G=\calE(x,y)$. The subgraph $A\cup B$ is either a bipartite graph. The subgraph $R$ is a clique of size $\alpha$ and additionally there are $d/\ell$ subgraphs $W_1, \ldots, W_{\ell/d}$ of size $c$ each, for some small constant $c$. In the case that $x$ and $y$ are disjoint $A \cup B$ is $d$-regular bipartite subgraph and all the sets $W_i$ are independent sets. In case $x$ and $y$ intersect on the $j\th$ coordinate, $A \cup B$ is a $(d-\ell)$-regular bipartite subgraph, all vertices in $A\cup B$ have $\ell = \lceil c\cdot \wtM_s^{1/s}/2n\rceil$ neighbors in $W_j$, and all vertices in $W_j$ have $\wtM_s^{1/s}$ in $A\cup B$. All other subgraphs $W_i$ for $i\neq j$ are independent sets. Hence, in the former case $M_s(G)\leq 3\wtM_s$ and in the  latter, $M_s(G) \approx c \wtM_s$.
}
	\label{Fig:lb_families1}
\end{figure}

\begin{figure}[tb!]
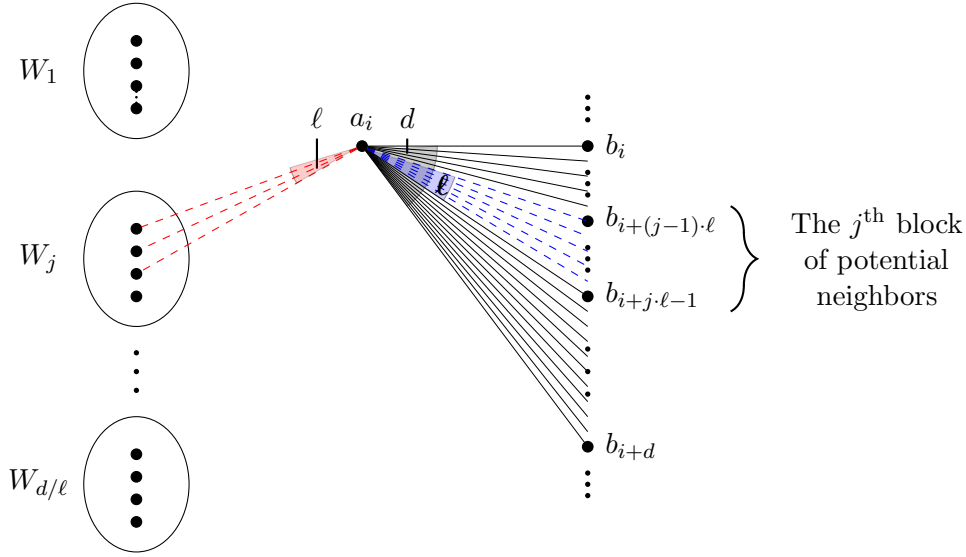
	
	\drawNbrs
	\caption{An illustration of the set of neighbors of the vertex $a_i\in A$ for the graph $\calE(x,y)$. 
	$a_i$ has $d$ neighbor, divided into $d/\ell$ blocks of size $\ell$. If $x$ and $y$ are disjoint then the identity of all the neighbors in each block is in $B$, so that for every $r\in [d]$, $a_i$'s $r\th$ neighbor is $b_{i+r-1 (\text{mod } n)}$. Otherwise, if $x$ and $y$ intersect on the $j\th$ index, then the $j\th$ block of neighbor of $a_i$ is in $W_j$.  
	The exact indices of the neighbors of $a_i$ in $W_j$ is determined according to $i$ as described in the construction. 
}
	\label{Fig:lb_families2}
\end{figure}

In the first sub-case $\alpha < (\wtM_s/n)^{1/s}$ we set $\ell=\lceil c\cdot \wtM_s^{1/s}/2n\rceil$ and $d=\alpha$. 
If $x$ and $y$ are disjoint then $M_s(G)=2n \cdot \alpha^s+\alpha^{s+1}\leq 3\wtM_s$, where the last inequality is due to Claim~\ref{clm:alpha-bounds}, and $\alpha(G)=\alpha$ due to the set $R$. Otherwise, if $x$ and $y$ intersect on index $j$, then the vertices of $W_j$ all have degree $2n \cdot \ell/c=\wtM_s^{1/s}$ and therefore $M_s(G)=c\cdot M_s^{1/s}+\alpha^{s+1}\approx c \cdot \wtM_s$ (by Claim~\ref{clm:alpha-bounds}) and $\alpha(G)=\alpha$ (here too it can be verified that $R$ is the subgraph that maximizes the average degree).

In the second sub-case, $\alpha \geq (\wtM_s/n)^{1/s}$, we set $\ell=\lceil c\cdot \wtM_s^{1/s}/2n\rceil$ as before, but now we set $d=\lfloor (\wtM_s/n)^{1/s}\rfloor$. Hence, in case $x$ and $y$ intersect we get $M_s(G) = c\cdot M_s + 2n \cdot (\lceil c\cdot \wtM_s^{1/s}/2n\rceil)^s+\alpha^{s+1}\approx c\cdot \wtM_s$, and otherwise $M_s(G)=2n \cdot (\lceil c\cdot \wtM_s^{1/s}/2n\rceil)^s+\alpha^{s+1} \leq 3\cdot \wtM_s$. Still in both cases $\alpha(G)=\alpha$. 

Therefore, in both sub-cases, $g(\calE(x,y))=\disj(x,y)$, and $(\calE,g)$ is an embedding of $f$.

It remains to prove that Alice and Bob can answer queries of $\mA$ according to $\calE(x,y)$ efficiently.
  \begin{description}
  \item[degree query] For the query $\degree(v)$, if $v \in A\cup B$ then Alice and Bob answer $d(v)=d$, and if $v$ is in $R$, then Alice and Bob answer $d(v)=\alpha$. If $v \in W_j$ for some $j \in d/\ell$ then Alice and Bob communicate to each other whether or not $j$ is in their set. If $x_j \wedge y_j$ then Alice and Bob announce that the sets intersect. Otherwise, if $j \notin x\wedge y$,  they answer $d(v)=0$.
    
  \item[neighbor query] For $q=\nbr_r(v)$, if $v=t_i \in R$ then Alice and Bob answer $\nbr_r(v)=t_{i+r (\text{mod } n)}$. If $v=a_i \in A$ then  Alice and Bob communicate to each other whether the $j\th$ index is in their set for $j=\lfloor r/\ell \rfloor$  (since the $r\th$ neighbor of $a_i$ belongs to the $(\lfloor r/\ell \rfloor)\th$ block of $a_i$'s neighbors). If $x_j \wedge y_j$ then Alice and Bob determine that $\nbr_j(v)$ is the $k\th$ vertex in $W_j$ where $k = j \mod \lfloor{r / \ell}\rfloor$. Otherwise, if $x_j \wedge y_j = 0$,  they answer $b_{i+r \text{(mod n)}}$. The case that $v=b_i \in B$ is analogous. Finally, if $v \in W_j$ then Alice and Bob communicate to each other whether the $j\th$ index is in their set and as before, if $j \in x\wedge y$ then they announce $\nbr_r(v)$ is the appropriate $a_i \in A$, and otherwise they answer $\nbr_r(v)=\varnothing.$
    
  \item[pair query] For $\pair(v_i,v_r)$, if the two vertices are in $R$ then Alice and Bob answer $(v_i,v_r)\in E$, and if only one of the vertices is in $R$ then they answer $(v_i,v_r)\notin E$. If the two vertices are in different sides of $A\cup B$ then assume without loss of generality that $v_i \in A$, $v_r \in B$ and $i<r$. If $r-i \notin [d]$ then Alice responds $\pair(v_i,v_r)=1$. Otherwise, Alice and Bob exchange $x_j$ and $y_j$ for $j=\lfloor (i-r)/\ell \rfloor$ (since $v_r$  in the $j\th$ block of $v_i$'s neighbors). If $x_j\wedge y_j = 1$ then they find $(v_i, v_r) \notin E$, and otherwise they answer $(v_i,v_r)\in E$.
\end{description}

It holds by the above, that every query made by the algorithm can be answered by $O(1)$ communication, and since $\mA$ can be used to compute $\gmom$, it follows from Theorem~\ref{thm:query-lb} that the expected query complexity of $\mA$ is $\Omega\left(d/\ell\right)=
\Omega\left(\min\left\{ \frac{n\alpha^{1/s}}{M_s^{1/s}}, \;n^{1-1/s}\right\}\right)$.

%% \begin{figure}[tb!]	
%% 	\Diameter
%% 	\caption{
%% 	}
%% 	\label{Fig:diameter}
%% \end{figure}

\paragraph{The case $\bm{M_s^{1/s}> n}$}

The construction of the graphs in this case is very similar to the previous case, except that now the sizes of the sets $C_1, \ldots, C_{d/\ell}$  is increased to $k=\lceil c\cdot \wtM_s/(2n)^s \rceil$ for a small constant $c$, and their potential contribution to the degree of the vertices of $A\cup B$, $\ell$ is also increased to $k$. Hence, if $x$ and $y$ intersect on the $j\th$ index, then the degree of the vertices in $W_j$ is $2n$ and the subgraph $W_j\cup (A\cup B)$ is a complete bipartite graph.  

As before, in the sub-case $\alpha < (\wtM/n)^{1/s}$, we set $d=\alpha$ and in the sub-case  $\alpha \geq (\wtM/n)^{1/s}$ we set $d=\lfloor (\wtM_s/n)^{1/s} \rfloor$. As noted in the proof of Theorem~8 in \cite{DegMoments}, we may assume without loss of generality that $\wtM_s \leq n^s\cdot \alpha/c'$ for a sufficiently large constant $c'$ since otherwise the lower bound  $\Omega\left(n^s\cdot \alpha/(\wtM_s)\right)$ becomes trivial. Therefore, indeed $d>\ell$ as required by our construction. Similarly we can assume without loss of generality that $\wtM_s\leq n^s\cdot \alpha/c'$  or else the lower bound $\Omega\left(n^s\cdot\alpha/(\wtM_s)\right)$ becomes trivial.
By the above settings, in the first sub-case, if  $x$ and $y$ are disjoint then $\alpha(G)=G$ and $M_s(G)=2n \cdot \alpha^s + \alpha^{s+1}\leq 3\wtM$, where the last inequality is by Claim~\ref{clm:alpha-bounds}. Otherwise, if $x$ and $y$ intersect on the $j\th$ index, then the vertices of $W_j$ all have degree $2n \cdot \ell/k$ and therefore $M_s(G)=k\cdot (2n \cdot \ell/k)^s + 2n \cdot \alpha^s + \alpha^{s+1}=2c\cdot \wtM_s + 2n \cdot \alpha^s+\alpha^{s+1} \approx c\wtM.$  Also by the choice of parameters above, the average degree in the subgraph $W_j \cup A\cup B$ is now $k$, but since we assumed $\wtM_s \leq n^s\cdot \alpha/c'$, it still holds that $R$ is the subgraph that maximizes the average and that $\alpha(G)=\alpha$. 

In the second sub-case $\alpha>(M_s/n)^{1/s}$,  we have that if 
$x$ and $y$ are disjoint, then $M_s(G)=2n \cdot (\wtM/n)	+\alpha^{s+1} \leq 3\wtM_s$, and if $x$ and $y$ do intersect then $M_s(G)=k\cdot (2n \cdot \ell/s)^s + 2n \cdot (\wtM_s/n)	 + \alpha^{s+1}\approx\cdot \wtM_s $. Here too, regardless of $x$ and $y$, $\alpha(G)=\alpha.$

As in the former case, it can be easily verified that Alice and Bob can answer any query made by $\mA$ from  $Q$  using  $O(1)$ communication, and it follows that the expected running time of $\mA$ is $ \Omega(d/\ell)=\Omega(d/k)=\Omega\left(\min\left\{\frac{\alpha \cdot n^s}{M_s}, \frac{n^{s-1/s}}{M_s^{1-1/s}} \right\}\right)$ as desired.	
\end{proof}

\end{document}